\def\BibTeX{{\rm B\kern-.05em{\sc i\kern-.025em b}\kern-.08em
    T\kern-.1667em\lower.7ex\hbox{E}\kern-.125emX}}
\newcommand*{\circled}[1]{\lower.7ex\hbox{\tikz\draw (0pt, 0pt)%
    circle (.5em) node {\makebox[1em][c]{\small #1}};}}
\newtheorem{theorem}{Theorem}
\newtheorem{lemma}{Lemma}
\newtheorem{definition}{Definition}
\def\energyReductionPercentageLow{11.36}
\def\energyReductionPercentageHigh{26.07}
\def\lifeConsumptionReductionPercentageLow{11.15}
\def\lifeConsumptionReductionPercentageHigh{26.75}
\def\delayReductionPercentageLow{8.37}
\def\delayReductionPercentageHigh{32.77}
\def\ourFrameworkName{SusCO}
\begin{document}

\title{Commercial Dishes Can Be My Ladder:\\ Sustainable and Collaborative Data Offloading \\in LEO Satellite Networks}

\author{\IEEEauthorblockN{Yi Ching Chou\IEEEauthorrefmark{1}\thanks{Y. C. Chou and L. Chen contributed equally to this research.}\thanks{This is an extended version of the paper accepted to INFOCOM 2025.}, Long Chen\IEEEauthorrefmark{1}, Hengzhi Wang\IEEEauthorrefmark{2}, Feng Wang\IEEEauthorrefmark{3},\\ Hao Fang\IEEEauthorrefmark{1}, Haoyuan Zhao\IEEEauthorrefmark{1},  Miao Zhang\IEEEauthorrefmark{1}, Xiaoyi Fan\IEEEauthorrefmark{4}, Jiangchuan Liu\IEEEauthorrefmark{1}}

\IEEEauthorblockA{\IEEEauthorrefmark{1}School of Computing Science, Simon Fraser University, Canada}

\IEEEauthorblockA{\IEEEauthorrefmark{2}College of Computer Science and Software Engineering, Shenzhen University, China}

\IEEEauthorblockA{\IEEEauthorrefmark{3}Department of Computer and Information Science, The University of Mississippi, USA}

\IEEEauthorblockA{\IEEEauthorrefmark{4}Jiangxing Intelligence Inc., Shenzhen, China}

Emails: ycchou@sfu.ca, longchen.cs@ieee.org, whz@szu.edu.cn, fwang@cs.olemiss.edu, \\ \{fanghaof, hza127, mza94\}@sfu.ca, xiaoyifan@jiangxingai.com, jcliu@sfu.ca

}

\maketitle

\begin{abstract}
Low Earth Orbit (LEO) satellite networks, characterized by their high data throughput and low latency, have gained significant interest from both industry and academia. Routing data efficiently within these networks is essential for maintaining a high quality of service. However, current routing strategies, such as bent-pipe and inter-satellite link (ISL) routing, have their unique challenges. The bent-pipe strategy requires a dense deployment of dedicated ground stations, while the ISL-based strategy can negatively impact satellite battery lifespan due to increased traffic load, leading to sustainability issues. 

In this paper, we propose \emph{sustainable collaborative offloading}, a framework that orchestrates groups of existing commercial resources like ground stations and 5G base stations for data offloading. This orchestration enhances total capacity, overcoming the limitations of a single resource. We propose the \emph{collaborator group set construction algorithm} to construct candidate groups and the \emph{collaborator selection and total payment algorithm} to select offloading targets and determine payments no less than the costs. Extensive real-world-based simulations show that our solution significantly improves energy consumption, satellite service life, and end-to-end latency.
\end{abstract}


\section{Introduction}
\label{sec:introduction}

Low Earth Orbit (LEO) satellite networks have attracted great interest from industry and academia due to their high throughput, low latency, and relatively low launch cost characteristics due to technological advancement \cite{su_broadband_2019}. LEO satellite networks are also believed to be an important component in the 6G network \cite{xiao_leo_6g_2024} since they can offer larger service coverage at less cost compared to terrestrial networks, providing truly ubiquitous Internet access services on Earth \cite{Giuliari_backbones_space_2020}.

An efficient routing strategy is significant for delivering high-quality services to users. LEO satellite networks currently use two main data routing strategies to reach ground destinations \cite{ma_starlink_measure_2023}: 1) bent-pipe routing and 2) inter-satellite link (ISL)-based routing. Each strategy, however, has unique challenges. In the bent-pipe strategy, each LEO satellite relays data between locations within the same coverage area, which requires a dense deployment of ground stations \cite{singh_community_2021, vas_l2d2_2021}. Conversely, the ISL-based strategy routes data between satellites using ISLs, which increases the traffic load on LEO satellites and can negatively impact their battery lifespan, leading to congestion and sustainability issues \cite{chou_sustainable_2022}.

\begin{figure}[t]
\centerline{\includegraphics[width=0.49\textwidth]{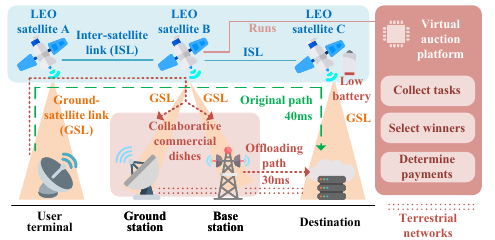}}
\caption{Sustainable and cost-effective \ourFrameworkName{} framework.}
\label{fig:vision}
\vspace{-6mm}
\end{figure}

In this paper, we argue that using solely these strategies is not the best for routing data in LEO satellite networks. We systematically study and answer the question: \emph{is there a cost-effective strategy that can utilize existing resources to reduce the need for new dedicated ground stations while jointly enhancing the sustainability and quality of service (QoS) in LEO satellite networks?}

One potential solution is to utilize existing commercial shared resources, such as commercial ground stations and 5G base stations, rather than constructing new dedicated ground stations. There exist numerous commercial shared resources, such as AWS Ground Station and Azure Orbital Ground Station \cite{aws_gs_sla_2022, azure_orbital_sla_2022}, and 5G base stations equipped with 5G New Radio \cite{3GPP_TR_38811}, that are capable of receiving data from satellites and offering \emph{Ground Station as a Service (GSaaS)}. As the satellite market expands, there will be more commercial shared resources offering GSaaS, similar to the Infrastructure as a Service model in cloud computing. For brevity, we refer to these shared resources as \emph{commercial dishes} in this paper. Utilizing these commercial dishes can significantly reduce the deployment costs for LEO satellite operators.

However, existing GSaaS offerings have high variation in terms of QoS, making it difficult to select the one that can maximize gains while minimizing costs. Furthermore, different commercial dishes have different Service Level Agreements with varying QoS and uptime guarantees, and a single commercial dish may not satisfy the requirements. Hence, selecting the group of commercial dishes under the satellite coverage that offers the most gains while maintaining high QoS is also a challenge.

To address these challenges, we propose a sustainable and cost-effective offloading framework named Sustainable Collaborative Offloading (\ourFrameworkName). In \ourFrameworkName, LEO satellite networks can offload data to a group of commercial dishes, mitigating congestion in satellite hops and prolonging LEO satellite service life without the need to construct new dedicated ground stations. The \ourFrameworkName{} orchestrates the collaboration among commercial dishes to enhance the total capacity \cite{singh_community_2021} such as higher bandwidth, lower end-to-end latency, and receiving more offloading data.

Fig. \ref{fig:vision} illustrates how \ourFrameworkName{} works. The user terminal generates and transmits data to the LEO satellite A, where the data can be routed on the original path (green dash line) but the LEO satellite C on the original path has a low battery level, where the energy consumption has a more negative impact on the battery lifespan for low battery levels \cite{yang_towards_2016}. On the other hand, the data can be offloaded from the LEO satellite B to a group of collaborative commercial dishes (red dash arrows) before the LEO satellite C and then further routed to the destination through terrestrial networks, which can reduce congestion\cite{li_leotp_2023} and significantly prolong the battery lifespan for the LEO satellite C. However, some questions must be carefully addressed before the LEO satellite networks can fully utilize the existing commercial dishes in \ourFrameworkName:

\begin{enumerate}
    \item How do we quantify the improvement in energy consumption, end-to-end latency, and satellite service life from offloading data to commercial dishes?

    \item How do we select a group of collaborative commercial dishes to enhance the total capacity while maintaining high service QoS?

    \item How do we design an incentive scheme for commercial dish and LEO satellite operators to participate in \ourFrameworkName?
\end{enumerate}

Existing studies focused on in-space data routing to address the sustainability issues \cite{yang_towards_2016, chou_sustainable_2022, chou_fcsn_2024} and improve QoS \cite{huang_Pheromone_2022}. Different from their work, we focus on utilizing existing commercial dishes and provide incentives to compensate the cost for commercial dish providers. Some other studies focused on the integration of space and terrestrial networks \cite{li_service_2018, chen_time_varying_2021, zhang_enabling_low_latency_2022, li_stable_hier_2024}, while we consider the sustainability issues for LEO satellite networks and incentives for the existing commercial dish providers in terrestrial networks to participate.

To the best of our knowledge, we are the first to systematically study the problem of offloading data from LEO satellites to existing commercial dishes while considering orchestrating collaboration, maintaining high service QoS, and incentives for dish providers and LEO satellite operators. The contributions of this paper are summarized below:

\begin{itemize}
\item We propose a novel sustainable and cost-effective offloading framework named Sustainable Collaborative Offloading (\ourFrameworkName), which orchestrates the collaboration between commercial dishes to receive offloading data from LEO satellite networks.

\item We propose two algorithms, \emph{collaborator group set construction algorithm} and \emph{collaborator selection and total payment algorithm}, to construct candidate groups of commercial dishes for higher capacity and determine payments for incentives, respectively.

\item We perform extensive simulations based on real-world settings. The results show that our \ourFrameworkName{} reduces $\energyReductionPercentageLow\%$ to $\energyReductionPercentageHigh\%$ more energy consumption, $\lifeConsumptionReductionPercentageLow\%$ to $\lifeConsumptionReductionPercentageHigh\%$ more life consumption, and  $\delayReductionPercentageLow\%$ to $\delayReductionPercentageHigh\%$ more end-to-end latency on average, compared to the other three state-of-the-art schemes.
\end{itemize}

The remainder of this paper is organized as follows: we discuss background and motivation in Section \ref{sec:background_and_motivation} and \ourFrameworkName{} framework and models in Section \ref{sec:system_model}. The problem formulation and solution are presented in Section \ref{sec:problem_formulation_and_solution}, followed by the performance evaluation in Section \ref{sec:eval}. The related work is discussed in Section \ref{sec:related_work}, future work is discussed in Section \ref{sec:future_work}, and the conclusion is in Section \ref{sec:conclusion}.

\section{Background and Motivation}
\label{sec:background_and_motivation}

\begin{figure}[t]
    \mbox{
        \begin{minipage}[t]{0.49\linewidth}
            \centering
            \includegraphics[width=\linewidth]{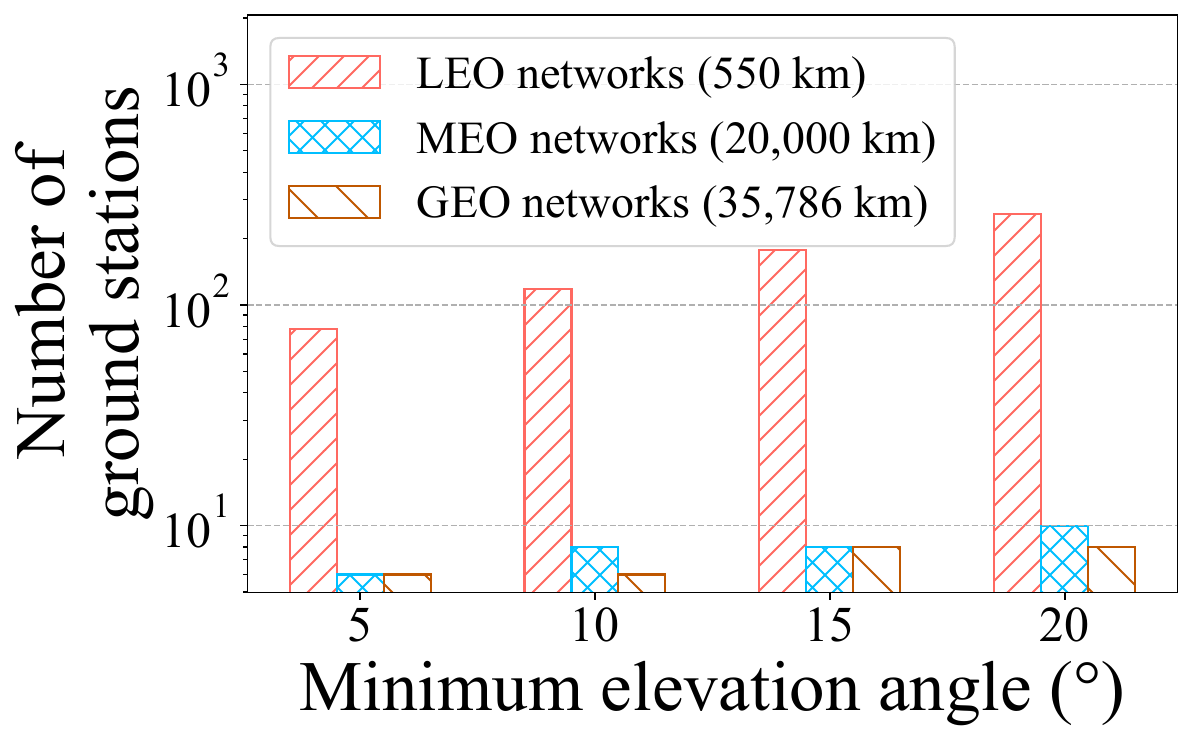} 
            \caption{Required ground stations for bent-pipe.} \label{fig:num_gs_required_bent_pipe}
        \end{minipage}
        \begin{minipage}[t]{0.49\linewidth}
            \centering
            \includegraphics[width=\linewidth]{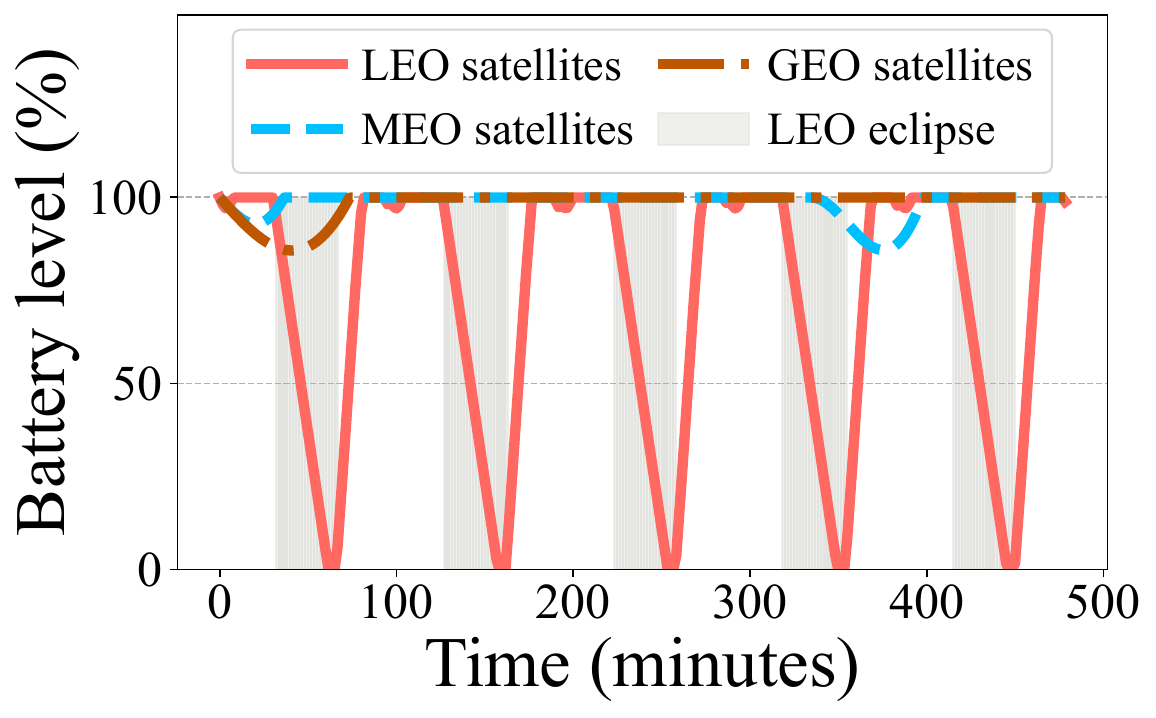} 
            \caption{Satellite battery level over time.} \label{fig:battery_level_changes}
        \end{minipage}
    } 
    \vspace{-3mm}
\end{figure}

\begin{figure}[t]
\centerline{\includegraphics[width=0.43\textwidth]{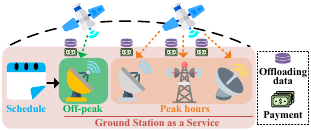}}
\caption{Overview of Ground Station as a Service.}
\label{fig:gsaas}
\vspace{-6mm}
\end{figure}

\subsection{Why Solely Using Bent-Pipe Is Not Practical?}

In the bent-pipe strategy, each LEO satellite relays data from one location to another location which requires the two location points to be under the coverage of the same LEO satellite \cite{ma_starlink_measure_2023}. This requires a dense ground station deployment, which is not a practical routing strategy for global coverage.

LEO satellite networks require more ground stations than the other satellite network family members, Medium Earth Orbit (MEO) and Geosynchronous (GEO) satellite networks, for bent-pipe routing. For example, LEO satellites with altitudes of $550$ km to $1,000$ km can only cover $1.70\%$ to $3.53\%$ of the total Earth surface area at any location \cite{deng_how_many_leo_2021}. We analyze the number of ground stations required for the bent-pipe routing strategy to achieve global coverage. As shown in Fig. \ref{fig:num_gs_required_bent_pipe}, LEO satellite networks require more than a magnitude number of ground stations than MEO and GEO satellite networks in different minimum elevation angles, which are the minimum possible angles between the horizon and the communicating satellite for a ground station.

Considering the high cost (i.e., construction, licensing application, and maintenance costs) of a ground station \cite{vas_l2d2_2021}, the bent-pipe routing strategy is not practical for many LEO satellite network operators, especially those at the very beginning stage with a limited budget. A more cost-effective way to access ground stations is necessary to help these LEO satellite network operators enter the market.

\subsection{Why Solely Using ISL Is Not Sustainable?}

Routing data \emph{solely} through inter-satellite links (ISLs) can decrease satellite battery levels and negatively impact the battery lifespan as it increases traffic load and energy consumption on the satellites. Each satellite hop on the path needs to receive, process, and forward the data to the next hop.

The negative impact on battery lifespan is more severe in LEO satellite networks due to more frequent eclipse events. We analyze the battery level changes over 8 hours for LEO, MEO, and GEO satellites based on the energy-related models from \cite{yang_towards_2016}. As shown in Fig. \ref{fig:battery_level_changes}, LEO satellites encounter eclipse events more than MEO and GEO satellites do due to their low altitudes, resulting in low battery levels several times.

Considering the negative impact on battery lifespan from the energy consumption at low battery levels \cite{yang_towards_2016}, \emph{solely} using ISLs for data routing is not sustainable. Having a cost-effective way to access ground stations is a promising solution to offload data to terrestrial networks, reducing the traffic load and prolonging the battery lifespan for LEO satellites.

\subsection{Ground Station as a Service Still Faces Challenges}
Ground Station as a Service (GSaaS) sells commercial ground station resources, i.e., the use of ground stations for receiving or sending data from or to satellites, in a \emph{pay-as-you-go} manner, where satellite operators only pay for what they have used. Fig. \ref{fig:gsaas} illustrates an overview of GSaaS. LEO satellite operators can schedule ground station resources from GSaaS providers based on their current requirements, i.e., peak or off-peak hours. Then, the providers can allocate corresponding resources adaptively to provide the capabilities that meet the requirements. Next, the data are offloaded to the scheduled ground stations, and the satellite operators pay for the resources they have used.

The pay-as-you-go manner can greatly reduce the burden of initial investment on ground stations for satellite operators, especially those at the beginning stage \cite{vas_l2d2_2021}. Furthermore, using GSaaS with pay-as-you-go can avoid the issue of excessive or insufficient ground stations since satellite operators usually need to plan for ground station construction ahead of time. With these advantages, GSaaS has gained much attention from the industry. For example, NASA plans to switch to GSaaS rather than building dedicated ground stations since the costs of maintaining and operating them consume a significant portion of its budget \cite{NASA2024_commercial_gs}.

However, GSaaS still faces unique challenges. While GSaaS offers flexible pay-as-you-go services, selecting the GSaaS provider that best suits the requirements can be challenging. Using GSaaS from a single provider may lead to budget inefficiency if the resources are overpriced, or not satisfying the requirements if the quality of service (QoS) does not meet the expectation. Using GSaaS from diversified providers is ideal but selecting a group of GSaaS providers is non-trivial since each provider offers services with different QoS, which can result in insufficient or excessive resources. Hence, to fully leverage the potential of GSaaS, we propose \ourFrameworkName{} framework for LEO satellite data offloading by adaptively orchestrating a group of commercial ground stations, while also ensuring service QoS.

\section{\ourFrameworkName{} Framework and Models}
\label{sec:system_model}

\subsection{\ourFrameworkName{} Framework Overview}

\textbf{Network scenario.} In \ourFrameworkName, we focus on the scenario consisting of a LEO satellite network and terrestrial networks, where the terrestrial networks can receive data from the LEO satellite network. The LEO satellites are inter-connected with inter-satellite links (ISLs) and the data can be routed via ISLs or ground-satellite links (GSLs). The terrestrial networks include commercial shared resources such as ground stations and 5G base stations equipped with 5G New Radio \cite{3GPP_TR_38811}, providing Ground Station as a Service (GSaaS) (i.e., AWS Ground Station and Azure Orbital Ground Station). We refer to these commercial shared resources as commercial dishes for brevity.\footnote{We use commercial dishes and dishes interchangeably in this paper.}

\textbf{\ourFrameworkName{} overview.} Fig. \ref{fig:vision} illustrates a scenario for \ourFrameworkName, where a user terminal generates and transmits data to the LEO satellite A through a GSL. Then, the data are routed via ISL to the LEO satellite B, which runs an auction on the virtual auction platform to determine a group of commercial dishes for offloading.  Next, the data are offloaded to a group of two collaborative commercial dishes, i.e.,  a ground station and a 5G New Radio base station, under the coverage of the LEO satellite B in terrestrial networks through a GSL. Finally, the data are further routed to the destination through terrestrial networks. We refer to the LEO satellite B as the offloading satellite, where the offloading satellite is the LEO satellite that offloads data.

\textbf{\ourFrameworkName{} network model.} We use $\mathcal{G} = \{\mathcal{V} =  \mathcal{V}_s \cup \mathcal{V}_g, \mathcal{E} = \mathcal{E}_s \cup \mathcal{E}_g\}$ to denote the topology of the LEO satellite and terrestrial networks, where $\mathcal{V}_s=\{v_i|i=1,\dots,|\mathcal{V}_s|\}$ is the set of LEO satellites, $\mathcal{V}_g=\{v_k|k=1,\dots,|\mathcal{V}_g|\}$ is the set of commercial dishes, $\mathcal{E}_s$ is the set of ISLs, and $\mathcal{E}_g$ is the set of GSLs. We adopt the discrete-time model to split the period of the LEO satellite network into a set of intervals, which can be represented as $T=\{\tau\}$. During each interval, the topology of the LEO satellite network is assumed to be stable.

\begin{figure}[t]
\centerline{\includegraphics[width=0.5\textwidth]{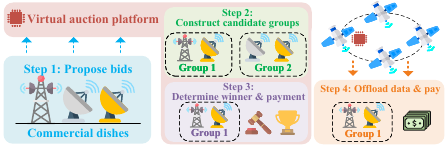}}
\caption{\small{The process of a reverse auction.}}
\label{fig:reverse_auction_collaboration_group}
\vspace{-3mm}
\end{figure}

\subsection{Reverse Auction Platforms and Collaborative Dishes}
In \ourFrameworkName, there exist some \emph{virtual auction platforms} on LEO satellites and each of them runs a reverse auction to select groups of collaborative dishes as winners to receive offloading data. Since dishes are operated by different providers, we use reverse auctions for dish providers to propose bids because only they have their cost information. Fig. \ref{fig:reverse_auction_collaboration_group} illustrates the process of a reverse auction with the following steps:

\begin{enumerate}[label=\circled{\arabic*}, noitemsep,left=0pt]
    \item \textbf{Proposing bids to tasks.} During each interval $\tau$, each auction platform collects nearby tasks and publishes them. Then, the dishes under the satellite coverage propose bids to the auction platform, where each bid contains the cost, the latency for routing data to the destination, and the dish capacity information.
    \item  \textbf{Constructing candidate groups.} After receiving bids, each auction platform constructs a set of candidate groups before selecting a winner. Each candidate group contains at least one dish. The capacity of each dish is considered during the construction to ensure the total capacity of each group satisfies the requirements.
    \item \textbf{Selecting winners and determining payments.} Then, each auction platform selects a winner from the set of candidate groups for each task it receives. Each winner is a group of collaborative dishes to receive the offloading data. We define the group of collaborative dishes as a collaborator group, which contains at least one dish. Next, the payment for the winner is determined by the auction platform.

    \item \textbf{Offloading data to winners.}  Next, each offloading satellite orchestrates the collaboration between the dish(es) in the winner for each task. The data traffic is split and offloaded to each dish in the winner at the beginning of the next interval. The payment will be made to each dish in the winner only if the data offloading succeeds.
\end{enumerate}

We assume each LEO satellite is equipped with a processing device \cite{zdnet_2023_spacex} which can host a virtual auction platform when needed. Each task can only be sent to one auction platform for being processed.

\textbf{Data offloading model.} The LEO satellite network has tasks that require data to be offloaded. Our \ourFrameworkName{} selects a group of commercial dishes as the collaborator group and orchestrates the collaboration between them to receive the offloading data for each task. The orchestrated collaboration can increase the total capacity to avoid the issue where a single commercial dish may not have sufficient capacity for the tasks. Formally, we denote the set of tasks on the auction platform run by satellite $v_i\in \mathcal{V}_s$ during $\tau$ as 
\begin{equation}
\label{eq:task_set}
    \mathbb{T}_i(\tau) = \{t_{i,j}^\tau=(\mathcal{D}_{i,j}^\tau, \mathcal{B}_{i,j}^\tau, \mathcal{T}_{i,j}^\tau)|j=1,\dots,|\mathbb{T}_i(\tau)|\}, 
\end{equation}
where $t_{i,j}^\tau$ is the $j$th task on the satellite $v_i$, and $\mathcal{D}_{i,j}^\tau$, $\mathcal{B}_{i,j}^\tau$,  and $\mathcal{T}_{i,j}^\tau$ are the QoS requirements of $t_{i,j}^\tau$ in terms of delay, bandwidth, and offloading data amount, respectively. 

Similarly, we define the set of bids collected by the auction platform run by $v_i$ for task $t_{i,j}^\tau$ during interval $\tau$ as 
\begin{equation}
\label{eq:bid_definition}
    \hspace{-0.5mm} B_{i,j}(\hspace{-0.5mm}\tau\hspace{-0.5mm})\hspace{-0.5mm}=\hspace{-0.5mm}\{b_k(\hspace{-0.5mm}\tau\hspace{-0.5mm})\hspace{-0.5mm}=\hspace{-0.5mm}(\xi_k, \gamma_k, \Delta_k,  c_k, v_{i^{\prime}})|v_k \hspace{-0.5mm}\in \hspace{-0.5mm}C_i(\hspace{-0.5mm}\tau\hspace{-0.5mm}) \hspace{-0.1mm}\cap\hspace{-0.1mm} C_{i'}(\hspace{-0.5mm}\tau\hspace{-0.5mm})
\},
\end{equation}
where $b_k(\tau)$ is the bid proposed by the commercial dish $v_k \in C_i(\tau) \cap C_{i'}(\tau)$, and $C_i(\tau)$ and $C_{i^{\prime}}(\tau)$ are the sets of dishes that are within the coverage of $v_i$ and $v_{i^{\prime}}$ during $\tau$ and can maintain a GSL with $v_i$ or $v_{i^{\prime}}$ during offloading, respectively. Each $b_k(\tau)$ contains the estimated latency $\xi_k$ for routing data to the destination, dish bandwidth $\gamma_k$, offloading data amount $\Delta_k$, the cost $c_k$, and the offloading satellite $v_{i^{\prime}}$. 

The calculation of the cost $c_k$ will be discussed in Section \ref{sec:dish_cost}. Since the satellite $v_i$ may be moving away from the dish $v_k$, the dish can optionally specify an offloading satellite $v_{i^{\prime}}$ in the bid $b_k(\tau)$ for a longer GSL connection time. Otherwise, $v_{i^{\prime}}$ and $v_i$ can be the same satellite. Without loss of generality, we assume $v_{i^{\prime}} = v_i$ for the discussion of our models.

We denote $\mathbb{G}_{i,j}(\tau)$ as the \emph{collaborator group set} for task $t_{i,j}^\tau$ during interval $\tau$. It can be constructed by the auction platform run by the satellite $v_i$, which is defined as
\begin{equation}
\mathbb{G}_{i,j}(\tau)=\{g_\lambda(\tau)|g_\lambda(\tau)\subseteq B_{i,j}(\tau)\},
\end{equation}
where $g_\lambda(\tau)$ is the $\lambda^{th}$ collaborator group in the collaborator group set. It is a subset of $B_{i,j}(\tau)$. The construction of the collaborator group set will be discussed in Section \ref{sec:collaborator_group_set_construction_algo}.

The data can be routed on the original or offloading path. Their formal definitions are as follows:

\begin{definition}
Given a task $t_{i,j}^\tau \in \mathbb{T}_i(\tau), \forall v_i\in \mathcal{V}_s$, we define $p^{SAT}_{i, j}(\tau)$ as the \emph{original path} from the source to destination LEO satellite without offloading, which is an ordered set with $s_{m,j}^{SAT}$ as the $m^{th}$ satellite on the path. We have 
\begin{equation}
    p^{SAT}_{i, j}(\tau) = (s_{m,j}^{SAT}|s_{m,j}^{SAT}\in \mathcal{V}_s).
\end{equation}
\end{definition}

\begin{definition}
Given a task $t_{i,j}^\tau \in \mathbb{T}_i(\tau), \forall v_i\in \mathcal{V}_s$, we define $p^{GRD}_{i, j\rightarrow k}(\tau)$ as the \emph{offloading path} from the source to the offloading LEO satellite $v_i$, where the data are offloaded to the dish $v_k$ from $v_i$. The offloading path is an ordered set with  $s_{m,j}^{GRD}$ as the $m^{th}$ satellite on the path. We have 
\begin{equation}
p^{GRD}_{i, j\rightarrow k}(\tau)= (s_{m,j}^{GRD}|s_{m,j}^{GRD}\in \mathcal{V}_s).
\end{equation}
\end{definition}

Since data are offloaded early before the destination satellite, we have $p^{GRD}_{i, j\rightarrow k}(\tau)\subset p^{SAT}_{i, j}(\tau)$ and $|p^{GRD}_{i, j\rightarrow k}(\tau)|<| p^{SAT}_{i, j}(\tau)|,$ $\forall v_i\in \mathcal{V}_s, t_{i,j}^\tau \in \mathbb{T}_i(\tau), b_k(\tau)\in B_{i,j}(\tau)$.

\subsection{Dish Offloading Cost Model}
\label{sec:dish_cost}
We consider the offloading data amount and bandwidth reservation time to model the offloading cost of the dish $v_k$, which is defined as
\begin{equation}
     c_k = \alpha \Delta_k + \beta\mathcal{R}_k,
\end{equation}
where $\Delta_k$ is the offloading data amount, $\mathcal{R}_k$ is the bandwidth reservation time of the dish $v_k$, $\alpha$ and $\beta$ are their unit prices. The bandwidth reservation time depends on the time the dish takes to receive the data amount with its bandwidth.

Dishes consume resources for receiving, buffering, and processing data, as well as for allocating bandwidth-related resources for offloading. They are assumed to have a stable energy supply without the need for batteries.

\subsection{Offloading Utility Model}

\subsubsection{Total Offloading Utility}
We define the total offloading utility for task $t_{i,j}^\tau$ if data are offloaded from the offloading satellite $v_i\in \mathcal{V}_s$ to the collaborator group $g_{\lambda}(\tau)\in \mathbb{G}_{i,j}(\tau)$ as
\begin{equation}
     \overline{U}_{i,j\rightarrow \lambda}^\tau = w_1U_{i,j\rightarrow \lambda}^E(\tau) + w_2U_{i,j\rightarrow \lambda}^D(\tau) + w_3U_{i,j\rightarrow \lambda}^L(\tau),
\end{equation}
where $U_{i,j\rightarrow \lambda}^E(\tau)$, $U_{i,j\rightarrow \lambda}^D(\tau)$, and $U_{i,j\rightarrow \lambda}^L(\tau)$ are the normalized utility of the reduced energy consumption, improved end-to-end latency, and extended service life gained from offloading data for the task $t_{i,j}^\tau$ to the collaborator group $g_{\lambda}(\tau)$, respectively, and $w_1$, $w_2$, and $w_3$ are the corresponding weights, which can be adjusted for different requirements.

\subsubsection{Utility of Reduced Energy Consumption}

The reduced energy consumption is achieved by offloading data to the collaborator group $g_{\lambda}(\tau)$, allowing the LEO satellite hops after the offloading satellite to save energy. We define the normalized utility of the reduced energy consumption as
\begin{equation}
    U_{i,j\rightarrow \lambda}^E(\tau) = \frac{E_{i,j}^{SAT}(\tau)-E_{i,j\rightarrow \lambda}^{GRD}(\tau)}{E_{i,j}^{SAT}(\tau)},
\end{equation}
where $E_{i,j}^{SAT}(\tau)$ and $E_{i,j\rightarrow \lambda}^{GRD}(\tau)$ are the energy consumption of the LEO satellites for routing on the original path and offloading path with the collaborator group $g_{\lambda}(\tau)$, respectively. The fixed energy consumption for maintaining satellite operation is omitted in $U_{i,j\rightarrow \lambda}^E(\tau)$ as it is amortized by multiple tasks.

We define the energy consumption of the LEO satellites for routing data on the original path $p^{SAT}_{i, j}(\tau)$ as
\begin{equation}
E_{i,j}^{SAT}(\tau) = \sum\nolimits_{s_{m,j}^{SAT} \in p^{SAT}_{i, j}(\tau)} \mathcal{T}_{i,j}^\tau\epsilon_m,
\end{equation}
where $\epsilon_m$ is the energy consumption of a unit data for $s_{m,j}^{SAT}$. 

Conversely, the energy consumption of the LEO satellites for routing data on the offloading path $p^{GRD}_{i, j\rightarrow k}(\tau)$ with each dish $v_k \in \mathcal{V}_g$ associated with $b_k(\tau)$ in the collaborator group $g_{\lambda}(\tau)$ is
\begin{equation}
E_{i,j\rightarrow \lambda}^{GRD}(\tau) = \sum\nolimits_{b_k(\tau)\in g_{\lambda}(\tau)} \sum\nolimits_{s_{m,j}^{GRD}\in p^{GRD}_{i, j\rightarrow k}(\tau)} \mathcal{T}_{i,j\rightarrow k}^\tau \cdot \epsilon_m,
\end{equation}
where $\mathcal{T}_{i,j\rightarrow k}^\tau$ is the split data traffic to the dish $v_k$ and we have $\sum\nolimits_{b_k(\tau)\in g_{\lambda}(\tau)} \mathcal{T}_{i,j\rightarrow k}^\tau = \mathcal{T}_{i,j}^\tau$.

\subsubsection{Utility of Improved End-to-End Latency}

The improved end-to-end latency is gained from routing data on the offloading path since LEO satellite networks are relatively more congested than the terrestrial networks \cite{li_leotp_2023}, which can be calculated from the latency difference between routing on the original and offloading paths. We define the normalized utility of the improved end-to-end latency as

\begin{equation}
    U_{i,j\rightarrow \lambda}^D(\tau) = \frac{D_{i,j}^{SAT}(\tau)-D_{i,j\rightarrow \lambda}^{GRD}(\tau)}{D_{i,j}^{SAT}(\tau)},
\end{equation}
where $D_{i,j}^{SAT}(\tau)$ and $D_{i,j\rightarrow \lambda}^{GRD}(\tau)$ are the end-to-end latency for routing data on the original and offloading path, respectively.

The end-to-end latency $D_{i,j}^{SAT}(\tau)$ for routing data on the original path is defined as 
\begin{equation}
D_{i,j}^{SAT}(\tau) = \sum\nolimits_{m=1}^{|p^{SAT}_{i, j}(\tau)| - 1} l_{m, m+1},
\end{equation}
where $l_{m, m+1}$ is the sum of propagation, queuing, and transmission delay for routing data from $s_{m,j}^{SAT}$ to $s_{m+1,j}^{SAT}$ on the original path $p^{SAT}_{i, j}(\tau)$. 

Conversely, the end-to-end latency $D_{i,j\rightarrow \lambda}^{GRD}(\tau)$ for routing data on the offloading path to the collaborator group $g_\lambda(\tau)$ is the maximum latency among the offloading paths with the dishes in the collaborator group
\begin{equation}
\label{eq:delay_offloading}
D_{i,j\rightarrow \lambda}^{GRD}(\tau) = \max\nolimits_{b_k(\tau)\in g_\lambda(\tau)}D_{i,j\rightarrow k}^{GRD}(\tau),
\end{equation}
where $D_{i,j\rightarrow k}^{GRD}(\tau)$ is the end-to-end latency for routing data on the offloading path to the dish $v_k$. We have
\begin{equation}
\label{eq:delay_offload_single_dish}
D_{i,j\rightarrow k}^{GRD}(\tau) = \sum\nolimits_{m=1}^{|p^{GRD}_{i, j\rightarrow k}(\tau)| - 1} \hspace{-1mm}l_{m, m+1} + l_{i, k} + \xi_{k},
\end{equation}
where $l_{i, k}$ is the latency for offloading data from the offloading satellite $v_i$ to the dish $v_k$, and $\xi_{k}$ is the estimated latency for routing data from the dish $v_k$ to the destination.

\subsubsection{Utility of Extended Service Life}

We focus on the battery lifespan for satellite service life since batteries are essential for satellite operation. We extend the battery life consumption model from \cite{yang_towards_2016} by considering the remaining battery lifespan. We define the service life cost for the LEO satellite $s_{m,j}^{SAT}$ (resp. $s_{m,j}^{GRD}$) on the original path $p^{SAT}_{i, j}(\tau)$ (resp. the offloading path $p^{GRD}_{i, j\rightarrow k}(\tau)$) as
\begin{equation}
L_{{m,j}} = \sum\nolimits_{b_k(\tau)\in g_{\lambda}(\tau)} K_{i,j \rightarrow k}^{m} \cdot e^{\frac{Q_m^{max} - Q_m}{Q_m}},
\end{equation}
where $K_{i,j \rightarrow k}^{m}$ is the battery life consumption \cite{yang_towards_2016} from the split data traffic $\mathcal{T}_{i,j\rightarrow k}^\tau$ on the satellite $s_{m,j}^{SAT}$ (resp. $s_{m,j}^{GRD}$), and $Q_m$ and $Q_m^{max}$ are the remaining and maximum battery lifespan on the satellite $s_{m,j}^{SAT}$ (resp. $s_{m,j}^{GRD}$), respectively. The service life cost increases when the remaining battery lifespan is lower and we define the cost as infinity if $Q_m=0$.

The life consumption model $K_{i,j \rightarrow k}^{m}$ is defined as $\int_{\psi^{m, 1}_{i, j \rightarrow k}}^{\psi^{m, 2}_{i, j \rightarrow k}} 10^{-A_{m}\psi}(1+A_{m}\ln{10\cdot} (1-\psi)) \, d\psi$ according to \cite{yang_towards_2016}, where $\psi^{m, 1}_{i, j \rightarrow k}$ and $\psi^{m, 2}_{i, j \rightarrow k}$ are the battery level percentages of the satellite $s_{m,j}^{SAT}$ (resp. $s_{m,j}^{GRD}$) before and after the energy consumption for routing the split data traffic $\mathcal{T}_{i,j\rightarrow k}^\tau$ of the task $t_{i,j}^\tau$, and $A_m$ is a constant depending on the battery model on the satellite $s_{m,j}^{SAT}$ (resp. $s_{m,j}^{GRD}$). Note that $K_{i,j \rightarrow k}^{m} = 0$ if the battery level percentage does not decrease \cite{yang_towards_2016}, i.e., $\psi^{m, 1}_{i, j \rightarrow k} \le \psi^{m, 2}_{i, j \rightarrow k}$, because we assume there is no energy consumption from the batteries. 

The utility of the extended service life for the task $t_{i,j}^\tau$ is gained from the reduced service life cost from the LEO satellite hops after the offloading satellite for each split data traffic. Thus, we have
\begin{equation}
 \hat{U}_{i,j\rightarrow \lambda}^L(\tau) = \sum\nolimits_{b_k(\tau)\in g_{\lambda}(\tau)} \sum\nolimits_{m=|p^{GRD}_{i, j\rightarrow k}(\tau)| + 1}^{|p^{SAT}_{i, j}(\tau)|} L_{{m,j}},
\end{equation}
and then we normalize it by the service life cost for routing data on the original path
\begin{equation}
\label{eq:normalized_utility_extended_service_life}
    U_{i,j\rightarrow \lambda}^L(\tau) = \frac{\hat{U}_{i,j\rightarrow \lambda}^L(\tau)}{\sum\nolimits_{m=1}^{|p^{SAT}_{i, j}(\tau)|} L_{{m,j}}}.
\end{equation}

\subsection{Dish Reliability-Based Utility Adjustment}

Each dish $v_k$ may fail to receive the offloading data due to several reasons, such as the arrival of unexpected traffic burst and hardware failure, which can cause a significant loss for the LEO satellite operator since the offloading satellite $v_i$ needs to re-transmit the data. To address this issue, we discount the total offloading utility of the collaborator group $g_{\lambda}(\tau)$ by considering the failure rate of each dish $v_k$ associated with $b_k(\tau)\in g_{\lambda}(\tau)$ to lower the chance of winning in future intervals. The discounted total offloading utility is defined as
\begin{equation}
\label{eq:discounted_utility}
U_{i,j\rightarrow \lambda}^\tau = \overline{U}_{i,j\rightarrow \lambda}^\tau \prod\nolimits_{b_k(\tau) \in g_{\lambda}(\tau)}(1-f_k^{\tau}),
\end{equation}
where $f_k^{\tau}$ is the failure probability in the \emph{past} history of the dish $v_k$. We initialize $f_k^{0} = 0$ for the first interval and update it for each subsequent interval $\tau$ as
\begin{equation}
f_k^{\tau} = 
\begin{cases} 
\label{eq:failure_update}
  \frac{f_k^{\tau - 1} n_k + \sigma_f}{n_k + 1}, & \text{if $v_k$ wins a bid in interval $\tau-1$}, \\
  f_k^{\tau - 1}, & \text{otherwise},
\end{cases}
\end{equation}
where $n_k$ is the number of times that the dish $v_k$ wins a bid in a collaborator group and $\sigma_f \in \{0, 1\}$ is a binary variable indicating whether the dish $v_k$ wins a bid but fails to receive the offloading data in the interval $\tau-1$ (i.e., $\sigma_f=1$) or not.

\section{Problem Formulation and Solution}
\label{sec:problem_formulation_and_solution}

\subsection{Collaborative Dishes as Ground Stations Problem}
The objective of the Collaborative Dishes as Ground Stations (CDGS) problem is to maximize the total offloading utility with the joint consideration of QoS requirements for the tasks and their budget. We formulate the CDGS problem as an optimization problem with Eq. (\ref{goal}) as the objective
\begin{equation}
\label{goal}
\textbf{maximize} \sum\nolimits_{t_{i,j}^\tau \in \mathbb{T}_i(\tau), g_{\lambda}(\tau) \in \mathbb{G}_{i,j}(\tau)} U_{i,j\rightarrow \lambda}^\tau x_{i,j\rightarrow \lambda}^\tau,
\end{equation}
\begin{equation*}
s.t. \quad (\ref{eq:task_set})-(\ref{eq:normalized_utility_extended_service_life}),
\end{equation*}
\begin{equation}
\label{cons0}
x_{i,j\rightarrow \lambda}^\tau \in \{0,1\},
\end{equation}
\begin{equation}
\label{cons1}
\sum\nolimits_{ g_{\lambda}(\tau)\in \mathbb{G}_{i,j}(\tau)} x_{i,j\rightarrow \lambda}^\tau \le 1, \forall t_{i,j}^\tau \in \mathbb{T}_i(\tau),
\end{equation}
\begin{equation}
\label{cons2}
D_{i,j\rightarrow \lambda}^{GRD}(\tau) \le \mathcal{D}_{i,j}^\tau, \forall x_{i,j\rightarrow \lambda}^\tau = 1, t_{i,j}^\tau \in \mathbb{T}_i(\tau),
\end{equation}
\begin{equation}
\label{cons:bandwidth}
 B_{i,j\rightarrow \lambda}^{GRD}(\tau) \ge \mathcal{B}_{i,j}^\tau, \forall x_{i,j\rightarrow \lambda}^\tau = 1, t_{i,j}^\tau \in \mathbb{T}_i(\tau),
\end{equation}
\begin{equation}
\label{cons2_2}
 \sum\nolimits_{b_k(\tau)\in g_\lambda(\tau)}\Delta_k \ge \mathcal{T}_{i,j}^\tau, \forall x_{i,j\rightarrow \lambda}^\tau = 1, t_{i,j}^\tau \in \mathbb{T}_i(\tau),
\end{equation}
\begin{equation}
\label{cons3}
\hspace{-2mm}x_{i,j\rightarrow \lambda}^\tau \sum\nolimits_{b_k(\tau)\in g_{\lambda}(\tau)}\hspace{-0.5mm} p_{ij\rightarrow k}^\tau \hspace{-0.5mm}\le \hspace{-0.5mm}\beta^\tau, \forall t_{i,j}^\tau \hspace{-0.5mm}\in \hspace{-0.5mm}\mathbb{T}_i(\tau), v_i\hspace{-0.5mm}\in \hspace{-0.5mm}\mathcal{V}_s,
\end{equation}
\begin{equation}
\label{cons4}
\hspace{-2mm} x_{i,j\rightarrow \lambda}^\tau c_k \le p_{ij\rightarrow k}^\tau , \forall t_{i,j}^\tau \hspace{-0.5mm}\in\hspace{-0.5mm} \mathbb{T}_i(\tau), v_i\hspace{-0.5mm}\in \hspace{-0.5mm}\mathcal{V}_s, b_k(\tau)\hspace{-0.5mm} \in\hspace{-0.5mm} g_{\lambda}(\tau),
\end{equation}
where $x_{i,j\rightarrow \lambda}^\tau$ in the constraint (\ref{cons0}) is a binary variable indicating whether the $\lambda^{th}$ collaborator group is assigned to the task $t_{i,j}^\tau$ for offloading ($x_{i,j\rightarrow \lambda}^\tau=1$) or not. The constraint (\ref{cons1}) indicates that at most one collaborator group can be assigned to a task. The constraints (\ref{cons2}), (\ref{cons:bandwidth}), and (\ref{cons2_2}) are the QoS requirements for end-to-end latency, bandwidth, and data amount. The bandwidth $ B_{i,j\rightarrow \lambda}^{GRD}(\tau)$ in constraint (\ref{cons:bandwidth}) is the sum of the dish bandwidth in the collaborator group defined as $\sum_{b_k(\tau)\in g_\lambda(\tau)} \gamma_k$. The constraint (\ref{cons2_2}) guarantees that the collaborator group can handle the required data amount. The constraint (\ref{cons3}) guarantees that the total payment to the assigned collaborator group is within the budget $\beta^\tau$ for the task, and the constraint (\ref{cons4}) guarantees that the payment to each dish in the collaborator group is at least its cost.

\subsection{Collaborator Group Set Construction Algorithm}
\label{sec:collaborator_group_set_construction_algo}
\begin{algorithm}[t]
\caption{Collaborator group set construction}
\label{algo:cgsc}
\begin{algorithmic}[1]
\State \textbf{Input:} $\mathbb{T}_i(\tau), \{B_{i,j}(\tau) | \forall t_{i,j}^\tau\in\mathbb{T}_i(\tau)\}, \mathcal{N}, \mathcal{M}$
\State \textbf{Output:} $\{\mathbb{G}_{i,j}(\tau)| \forall t_{i,j}^\tau\in\mathbb{T}_i(\tau)\}$
\For{\( t_{i,j}^\tau \in \mathbb{T}_i(\tau) \)} 
\label{line:candi_bid_begin}
\State $\mathbb{G}_{i,j}(\tau)=\emptyset, \overline{\mathbb{G}}_{i,j}=\emptyset, \overline{B}_{i,j}(\tau)=\emptyset$ \label{line:init_group_set}
\For{$b_k(\tau)\in B_{i,j}(\tau)$} 
\If{$D_{i,j\rightarrow k}^{GRD}(\tau) \le \mathcal{D}_{i,j}^\tau$}
\State $\overline{B}_{i,j}(\tau)=\overline{B}_{i,j}(\tau) \cup \{b_k(\tau)\}$
\EndIf
\EndFor \label{line:candi_bid_end}
\State $[\overline{\mathbb{G}}_{i,j}]^1=\{\{b_k(\tau)\}| \forall b_k(\tau)\in \overline{B}_{i,j}(\tau)\}$ \label{line:layer_1_candidate_group_sets_1}
\State $\overline{\mathbb{G}}_{i,j}=\overline{\mathbb{G}}_{i,j}\cup [\overline{\mathbb{G}}_{i,j}]^1, n=2$ \label{line:layer_1_candidate_group_sets_2}
\While{$n \le \mathcal{N}$} \label{line:start_layer_group_constuction}
\State Construct $[\overline{\mathbb{G}}_{i,j}]^{n-1}_{\mathcal{M}}$ that contains the top $\mathcal{M}$ groups of $[\overline{\mathbb{G}}_{i,j}]^{n-1}$ with the smallest costs \label{line:construct_M}
\State $[\overline{\mathbb{G}}_{i,j}]^n\hspace{-0.5mm}=\hspace{-0.5mm}\{g_\lambda(\tau) \cup g_{\lambda'}(\tau)| g_\lambda(\tau),g_{\lambda'}(\tau)\hspace{-0.5mm}\in\hspace{-0.5mm}[\overline{\mathbb{G}}_{i,j}]^{n-1}_{\mathcal{M}}\}$ \label{line:merge_layer_groups}
\State $\overline{\mathbb{G}}_{i,j}=\overline{\mathbb{G}}_{i,j}\cup [\overline{\mathbb{G}}_{i,j}]^n$ \label{line:place_layer_n_to_candidate_group}
\State $n=n+1$
\EndWhile \label{line:end_layer_group_constuction}
\For{$g_\lambda(\tau) \in \overline{\mathbb{G}}_{i,j}$} \label{line:traffic_bandwidth_req_check_start}
\If{$\sum_{b_k(\tau)\in g_\lambda(\tau)} \Delta_k < \mathcal{T}_{i,j}^\tau \text{or}$ \\ $\sum_{b_k(\tau)\in g_\lambda(\tau)} \gamma_k < \mathcal{B}_{i,j}^\tau$} \label{line:traffic_amount_bandwidth_check}
\State $\overline{\mathbb{G}}_{i,j}=\overline{\mathbb{G}}_{i,j}-\{g_\lambda(\tau) \}$ \label{line:remove_group_no_satisfy_req}
\EndIf
\EndFor  \label{line:bid_manipulation_end} \label{line:traffic_bandwidth_req_check_end}

\For{$g_\lambda(\tau)\in \overline{\mathbb{G}}_{i,j}$} \label{line:budget_filter_begin}
\If{$\sum_{b_k(\tau)\in g_\lambda(\tau)} c_k \le \beta^\tau$}
\State $\mathbb{G}_{i,j}(\tau)=\mathbb{G}_{i,j}(\tau)\cup \{g_\lambda(\tau)\}$
\EndIf
\EndFor \label{line:budget_filter_end}
\EndFor
\end{algorithmic}
\end{algorithm}

We propose the Collaborator Group Set Construction (CGSC) algorithm as shown in Algorithm \ref{algo:cgsc} to construct a set of collaborator groups as potential candidates before selecting a winner to reduce the problem scale. Given a set of tasks $\mathbb{T}_i(\tau)$ and bids $B_{i,j}(\tau)$, the CGSC algorithm controls the size of the collaborator group set for each task using the parameters $\mathcal{N}$ and $\mathcal{M}$, respectively.

We begin by iterating each task in line \ref{line:candi_bid_begin}. In lines \ref{line:init_group_set}-\ref{line:candi_bid_end}, we initialize the candidate bid set $\overline{B}_{i,j}(\tau)$ for the task $t_{i,j}^\tau$ by considering the individual dish delay for the constraint (\ref{cons2}), in which each dish offloading delay cannot exceed the task delay requirement. Next, in lines \ref{line:layer_1_candidate_group_sets_1}-\ref{line:layer_1_candidate_group_sets_2}, we initialize a candidate group set where each group contains only one bid, denoted as layer-1 candidate group set $[\overline{\mathbb{G}}_{i,j}]^1$, and then we merge it to the total candidate group set $\overline{\mathbb{G}}_{i,j}$.

Next, in lines \ref{line:start_layer_group_constuction}-\ref{line:end_layer_group_constuction}, we recursively construct the layer-n candidate group set $[\overline{\mathbb{G}}_{i,j}]^n$ based on the top $\mathcal{M}$ groups ($\mathcal{M}$ groups with the smallest costs) from the previous layer-(n-1) candidate group set $[\overline{\mathbb{G}}_{i,j}]^{n-1}$, denoted as $[\overline{\mathbb{G}}_{i,j}]^{n-1}_{\mathcal{M}}$. We repeat the process until the layer-$\mathcal{N}$ candidate group set $[\overline{\mathbb{G}}_{i,j}]^{\mathcal{N}}$ is constructed. Specifically, in line \ref{line:construct_M}, the top $\mathcal{M}$ groups with the smallest costs are selected (can be extended for other metrics). From $[\overline{\mathbb{G}}_{i,j}]^{n-1}_{\mathcal{M}}$, each pair of groups are merged to form a new group in $[\overline{\mathbb{G}}_{i,j}]^n$ as shown in line \ref{line:merge_layer_groups}. Then, $[\overline{\mathbb{G}}_{i,j}]^n$ is merged to the total candidate group set $\overline{\mathbb{G}}_{i,j}$ in line \ref{line:place_layer_n_to_candidate_group}.

From lines \ref{line:traffic_bandwidth_req_check_start}-\ref{line:traffic_bandwidth_req_check_end}, we check whether each candidate group in $\overline{\mathbb{G}}_{i,j}$ satisfies the data amount and bandwidth requirement. Then, we remove the group from the candidate group set if it does not satisfy the requirements. Finally, from lines \ref{line:budget_filter_begin}-\ref{line:budget_filter_end}, we perform a preliminary budget check for the cost of each candidate group and merge it to the collaborator group set if the budget can cover the cost. We will perform a budget check for the actual payment in Algorithm \ref{algo:ucb1}.

\begin{figure}[t]
\centerline{\includegraphics[width=0.49\textwidth]{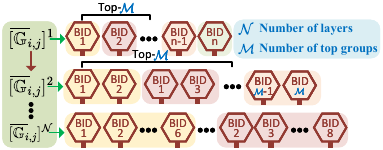}}
\caption{\small{The overview of the $[\overline{\mathbb{G}}_{i,j}]^n$ construction.}}
\label{fig:group_construction}
\end{figure}

Fig. \ref{fig:group_construction} illustrates the overview of the $[\overline{\mathbb{G}}_{i,j}]^n$ construction. We first initialize the layer-1 candidate group set ($[\overline{\mathbb{G}}_{i,j}]^1$) where each bid is a collaborator group. Then, we take each pair of the top-$\mathcal{M}$ groups to form a new group and merge them to get the layer-2 candidate group set ($[\overline{\mathbb{G}}_{i,j}]^2$). We repeat the same process for $\mathcal{N}$ times until we get the layer-$\mathcal{N}$ candidate group set ($[\overline{\mathbb{G}}_{i,j}]^{\mathcal{N}}$).

\subsection{Collaborator Selection and Total Payment Algorithm}
\begin{algorithm}[t]
\caption{Collaborator selection and total payment}
\label{algo:ucb1}
\begin{algorithmic}[1]
\State \textbf{Input:} $\mathbb{T}_i(\tau), \{\mathbb{G}_{i,j}(\tau),\{n_{\lambda} | \forall g_{\lambda}(\tau) \in \mathbb{G}_{i,j}(\tau)\} | \forall t_{i,j}^\tau \in \mathbb{T}_i(\tau)\}$
\State \textbf{Output:} $\{g_\lambda(\tau), \{p_{ij\rightarrow k}^\tau| \forall b_k(\tau)\in g_{\lambda}(\tau)\}| \forall t_{i,j}^\tau \in \mathbb{T}_i(\tau) \}$
\For{\( t_{i,j}^\tau \in \mathbb{T}_i(\tau) \)} \label{line:for_task}
\State $g_{\lambda}(\tau) =\emptyset$
\While{$g_{\lambda}(\tau) =\emptyset~\text{and}~\mathbb{G}_{i,j}(\tau)\neq \emptyset$} \label{line:group_find}
\State $n_{sum} = \ln \sum_{g_\lambda(\tau) \in \mathbb{G}_{i,j}(\tau)}n_{\lambda}$ \label{line:co_begin}
\State $g_{\overline{\lambda}}(\tau) \hspace{-1mm}=\hspace{-1mm}\arg\max_{g_\lambda(\tau)\in \mathbb{G}_{i,j}(\tau), \forall U_{i,j\rightarrow \lambda}^\tau > 0}\\ \hspace{-1mm}\frac{U_{i,j\rightarrow \lambda}^\tau}  {\sum_{b_k(\tau)\in g_\lambda(\tau)} c_k}\hspace{-1mm}+\hspace{-1mm}\sqrt{\frac{2n_{sum}}{n_{\lambda}} }$ \label{line:co_end}
\State  $\mathbb{G}_{i,j}(\tau)=\mathbb{G}_{i,j}(\tau)-\{g_{\overline{\lambda}}(\tau)\}$ \label{line:pay_begin}
\State $n^{\prime}_{sum} = \ln \sum_{g_\lambda(\tau) \in \mathbb{G}_{i,j}(\tau)}n_{\lambda}$
\State $\mathcal{U} = \max_{g_\lambda(\tau)\in \mathbb{G}_{i,j}(\tau), \forall U_{i,j\rightarrow \lambda}^\tau > 0} \\ \frac{U_{i,j\rightarrow \lambda}^\tau }{\sum_{b_k(\tau)\in g_\lambda(\tau)} c_k} +\sqrt{\frac{2n^{\prime}_{sum}}{n_\lambda}}$ \label{line:u_define}
\State $p_{ij\rightarrow \overline{\lambda}}^\tau = \frac{U_{i,j\rightarrow \overline{\lambda}}^\tau + \sum_{b_k(\tau)\in g_{\overline{\lambda}(\tau)}} c_k \sqrt{\frac{2n_{sum}}{n_{\overline{\lambda}}}}}{\mathcal{U}}$ \label{line:pay_end}
\If{$p_{ij\rightarrow \overline{\lambda}}^\tau \le \beta^\tau$} \label{line:pay_alloc_begin}
\State $\beta^\tau = \beta^\tau - p_{ij\rightarrow \overline{\lambda}}^\tau$
\State $g_\lambda(\tau)= g_{\overline{\lambda}}(\tau), n_{\lambda}=n_{\lambda} + 1$
\State $p_{ij\rightarrow k}^\tau = \frac{c_k}{\sum_{b_k(\tau)\in g_{\overline{\lambda}}(\tau)}c_k}p_{ij\rightarrow \overline{\lambda}}^\tau, \forall b_k(\tau)\in g_{\overline{\lambda}}(\tau)$ \label{line:pay_alloc_end}

\EndIf
\EndWhile
\EndFor
\end{algorithmic}
\end{algorithm}

We propose the Collaborator Selection and Total Payment (CSTP) algorithm, as shown in Algorithm \ref{algo:ucb1}, to select a collaborator group as the winner for each task. We take the collaborator group set for each task, i.e., \{$\mathbb{G}_{i,j}(\tau) | \forall t_{i,j}^\tau \in \mathbb{T}_i(\tau)\}$, given by Algorithm \ref{algo:cgsc} and the number of times being selected for each group, i.e., $\{n_{\lambda}| \forall g_{\lambda}(\tau)\in\mathbb{G}_{i,j}(\tau)\}$, as inputs. Then, the CSTP algorithm determines the payment for each dish in the winner collaborator group.

We begin by iterating all the tasks $t_{i,j}^\tau \in \mathbb{T}_i(\tau)$ collected by the auction platform on the offloading satellite $v_i$ in line \ref{line:for_task}. For each task, we iterate the collaborator group set $\mathbb{G}_{i,j}(\tau)$ in line \ref{line:group_find} to select a collaborator group $g_\lambda(\tau)$ for offloading. In lines \ref{line:co_begin}-\ref{line:co_end}, we calculate the total number of times the collaborator groups are selected in log scale and then select the collaborator group, denoted as $g_{\overline{\lambda}}(\tau)$, with the maximum utility-to-cost ratio plus an adjustment factor. The adjustment factor allows the collaborator groups with a higher utility or fewer number of times being selected to be more likely to be selected for exploitation or exploration, respectively.

In lines \ref{line:pay_begin}-\ref{line:pay_end}, we find the second highest utility-to-cost ratio, denoted as $\mathcal{U}$, which is used to calculate the payment for the collaborator group $g_{\overline{\lambda}}(\tau)$. Finally, in lines \ref{line:pay_alloc_begin}-\ref{line:pay_alloc_end}, if the budget is sufficient, we assign $g_{\overline{\lambda}}(\tau)$ to $g_\lambda(\tau)$ as the selected collaborator group, update the number of times being selected $n_\lambda$ and the budget $\beta^\tau$, and allocate the payment for each dish based on the proportion of its cost in the collaborator group. 

\section{Theoretical Analysis}
\label{sec:theoretical_analysis}

\begin{definition}
(\textbf{Monotonicity}). For each collaborator group $g_\lambda(\tau), \forall v_i\in \mathcal{V}_s, t_{i,j}^\tau \in \mathbb{T}_i(\tau), g_\lambda(\tau)\in \mathbb{G}_{i,j}(\tau)$ with a set of costs $\{c_k|b_i(\tau)\in g_\lambda(\tau)\}$, if it wins, then another collaborator group $g_{\lambda^\prime}$ with the same total utility and exploration factor but a smaller total cost $\sum_{b_k(\tau)\in g_{\lambda^\prime}(\tau)} c_k$ also wins.
\end{definition} 

\begin{definition}
(\textbf{Critical value}). Given a collaborator group $g_\lambda(\tau), \forall v_i\in \mathcal{V}_s, t_{i,j}^\tau \in \mathbb{T}_i(\tau), g_\lambda(\tau)\in \mathbb{G}_{i,j}(\tau)$ with a set of payments $p_{ij\rightarrow k}^\tau, \forall b_k(\tau)\in g_\lambda(\tau)$, if every dish of this group declares a cost that is no greater than the payment $c_k \le p_{ij\rightarrow k}^\tau$, it must win; otherwise, it loses.
\end{definition}

\begin{definition}
    (\textbf{Individual rationality}). Given a collaborator group $g_\lambda(\tau), \forall v_i\in \mathcal{V}_s, t_{i,j}^\tau \in \mathbb{T}_i(\tau), g_\lambda(\tau)\in \mathbb{G}_{i,j}(\tau)$, our payment strategy satisfies $p_{ij\rightarrow k}^\tau \ge c_k, \forall b_k(\tau) \in g_\lambda(\tau)$.
\end{definition}

\begin{lemma}
\label{lemma-1}
The CSTP algorithm meets monotonicity.
\end{lemma}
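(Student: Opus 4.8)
The plan is to reduce the claim to a monotonicity property of the selection rule in Algorithm~\ref{algo:ucb1}. Recall that in line~\ref{line:co_end} the selected group $g_{\overline{\lambda}}(\tau)$ maximizes the score
\[
s(g_\lambda) = \frac{U_{i,j\rightarrow\lambda}^\tau}{\sum_{b_k(\tau)\in g_\lambda(\tau)} c_k} + \sqrt{\tfrac{2n_{sum}}{n_\lambda}},
\]
and that a selected group is actually assigned only if its payment clears the budget test in line~\ref{line:pay_alloc_begin}. Accordingly I would establish monotonicity in two independent pieces: (i) the score is strictly decreasing in the group's total cost, so a cheaper group with identical utility and exploration factor is never ranked lower; and (ii) the payment computed in line~\ref{line:pay_end} is strictly increasing in that same total cost, so a cheaper group never fails the budget test that the more expensive winner already passed.

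For piece (i) I would fix the utility $U_{i,j\rightarrow\lambda}^\tau>0$ (the maximization in line~\ref{line:co_end} only ranges over groups with positive utility) and the exploration factor $\sqrt{2n_{sum}/n_\lambda}$, exactly as the definition of monotonicity stipulates. Writing $C=\sum_{b_k(\tau)\in g_\lambda(\tau)}c_k$ for the cost of the original winner and $C'<C$ for the cheaper competitor $g_{\lambda'}$, the only cost-dependent part of $s$ is the ratio $U/C$, which is strictly decreasing in $C$; hence $s(g_{\lambda'})>s(g_\lambda)$. Because this raises the cheaper group's score while leaving every other group's score untouched, a group that attained the maximum in line~\ref{line:co_end} still does so after the cost reduction, i.e.\ $g_{\lambda'}$ is reached no later in the while loop.

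For piece (ii) I would observe that the runner-up score $\mathcal{U}$ in line~\ref{line:u_define} is a maximum taken over the \emph{other} groups after the candidate has been removed in line~\ref{line:pay_begin}; it therefore does not depend on the candidate's own cost $C$. The payment then has the form $p=(U_{i,j\rightarrow\overline\lambda}^\tau + C\sqrt{2n_{sum}/n_{\overline\lambda}})/\mathcal{U}$, which is strictly increasing in $C$. Consequently $p(C')<p(C)\le\beta^\tau$, so the cheaper group also clears the budget check in line~\ref{line:pay_alloc_begin} and is assigned as the winner, completing the two-piece argument.

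The step I expect to be the main obstacle is reconciling piece (ii) with the iterative structure of the while loop. Since $\mathcal{U}$ enters the payment of \emph{every} candidate, replacing $g_\lambda$ by the cheaper $g_{\lambda'}$ raises the runner-up score seen by the higher-scored groups and can only shrink their payments; I would need to argue carefully that this coupling cannot let a previously budget-rejected competitor pass the test and overtake $g_{\lambda'}$ before it is reached. In the common regime where the first selected group already clears the budget, the loop executes once and the coupling is vacuous, so the clean two-piece argument applies directly; the delicate part is extending it to runs in which several candidates are rejected for budget reasons before a winner emerges.
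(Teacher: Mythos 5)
Your piece~(i) is, word for word, the paper's entire proof: the paper fixes the utility and exploration factor, notes that the score $\frac{U_{i,j\rightarrow \lambda}^\tau}{\sum_{b_k(\tau)\in g_\lambda(\tau)} c_k}+\sqrt{2n_{sum}/n_{\lambda}}$ is strictly larger for the cheaper group, and concludes that it ``also wins.'' Implicitly, then, the paper reads ``wins'' as \emph{attains the argmax in line~\ref{line:co_end}}, and it never engages the budget gate in line~\ref{line:pay_alloc_begin} at all. So the part of your argument that overlaps with the paper is correct and complete; your piece~(ii) is an attempt to prove something strictly stronger than what the paper proves.

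On that stronger claim, the obstacle you flag at the end is not merely delicate --- it is fatal in general, so you were right to be suspicious. Consider groups $A$ and $B$ with $s(A)>s(B)$, where $A$ is selected first but rejected for budget, and $B$ then wins. Now lower $B$'s total cost (utility and exploration factor unchanged). If $s(B)$ remains below $s(A)$, then $A$ is still selected first, but the runner-up score $\mathcal{U}$ that $A$ faces in line~\ref{line:u_define} now includes $B$'s \emph{increased} score, so $A$'s payment $\bigl(U_A+C_A\sqrt{2n_{sum}/n_A}\bigr)/\mathcal{U}$ strictly decreases and may now clear $\beta^\tau$; in that case $A$ wins, the while loop terminates, and the cheaper $B$ is never reached. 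Hence monotonicity through the budget check can genuinely fail, and no proof of your two-piece decomposition can close this case. The resolution is to do what the paper (tacitly) does: state monotonicity as a property of the score-based selection rule only, in which case your piece~(i) alone is the proof, and drop piece~(ii) --- or, if you keep the stronger reading, note explicitly that the budget gate breaks strict Myerson monotonicity and that the truthfulness argument built on Lemmas~\ref{lemma-1} and~\ref{lemma-2} therefore only applies to the auction restricted to budget-feasible outcomes.
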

\begin{proof}
Assume that the utility and exploration factor of the winner $g_\lambda(\tau)$ are $U_{i,j\rightarrow \lambda}^\tau$ and $\sqrt{\frac{2n_{sum}}{n_{\lambda}}}$, respectively. The total bid cost of $g_\lambda(\tau)$ is $\sum_{b_k(\tau)\in g_\lambda(\tau)} c_k$. If $\exists g_{\lambda^\prime}(\tau)\in \mathbb{G}_{i,j}(\tau) $ that has the same utility and exploration factor, but a smaller sum of cost, i.e., $\sum_{b_{k'}(\tau)\in g_{\lambda^\prime}(\tau)} c_{k'} < \sum_{b_k(\tau)\in g_{\lambda}(\tau)} c_k$, we have $\frac{U_{i,j\rightarrow \lambda}^\tau}  {\sum_{b_k(\tau)\in g_\lambda(\tau)} c_k}+ \sqrt{\frac{2n_{sum}}{n_{\lambda}} }  \hspace{-1.5mm} < \hspace{-1.5mm}\frac{U_{i,j\rightarrow \lambda'}^\tau}  {\sum_{b_{k'}(\tau)\in g_{\lambda^\prime}(\tau)} c_{k'}}+ \sqrt{\frac{2n'_{sum}}{n_{\lambda'}} }$, so $g_{\lambda^\prime}(\tau)$ also wins.
\end{proof}

\begin{lemma}
\label{lemma-2}
The payment of the winner given by the CSTP algorithm is the critical value.
\end{lemma}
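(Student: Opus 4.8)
The plan is to combine the monotonicity established in Lemma~\ref{lemma-1} with a direct substitution into the selection rule of Algorithm~\ref{algo:ucb1}. First I would recast the notion of ``winning'' in terms of the score used in line~\ref{line:co_end}: the group $g_{\overline{\lambda}}(\tau)$ is the winner precisely when its score $\frac{U_{i,j\rightarrow\overline{\lambda}}^\tau}{\sum_{b_k(\tau)\in g_{\overline{\lambda}}(\tau)} c_k} + \sqrt{\frac{2n_{sum}}{n_{\overline{\lambda}}}}$ is at least the runner-up score $\mathcal{U}$ computed in line~\ref{line:u_define}. Abbreviating $U := U_{i,j\rightarrow\overline{\lambda}}^\tau$, $C := \sum_{b_k(\tau)\in g_{\overline{\lambda}}(\tau)} c_k$, and $e := \sqrt{\frac{2n_{sum}}{n_{\overline{\lambda}}}}$, the winning condition reduces to the single inequality $\frac{U}{C} + e \ge \mathcal{U}$, equivalently $U \ge C(\mathcal{U} - e)$. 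The critical value of the group is then the largest total declared cost for which this inequality still holds while $\mathcal{U}$, $U$, and $e$ remain fixed; here $\mathcal{U}$ depends only on the remaining groups and $U,e$ are independent of the declared costs, so the only quantity that moves is $C$.

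Next I would show that the total payment $p_{ij\rightarrow\overline{\lambda}}^\tau = \frac{U + C e}{\mathcal{U}}$ of line~\ref{line:pay_end} is exactly this threshold. The core step is to substitute $p_{ij\rightarrow\overline{\lambda}}^\tau$ in place of the group's total cost and compare the resulting score against $\mathcal{U}$: the inequality $\frac{U}{p_{ij\rightarrow\overline{\lambda}}^\tau} + e \ge \mathcal{U}$ clears denominators to $U\mathcal{U} \ge (U + Ce)(\mathcal{U} - e)$, which after cancellation collapses to precisely the winning condition $U \ge C(\mathcal{U}-e)$. Hence at a total cost equal to the payment the group still meets the selection bound, and by the monotonicity of Lemma~\ref{lemma-1} every smaller total cost yields a strictly larger score and therefore also wins, whereas every larger total cost drives the score below $\mathcal{U}$ and loses. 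This pins the payment as the boundary between winning and losing.

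To connect the group-level threshold with the per-dish phrasing of the definition, I would invoke the proportional split $p_{ij\rightarrow k}^\tau = \frac{c_k}{C}\, p_{ij\rightarrow\overline{\lambda}}^\tau$ from line~\ref{line:pay_alloc_end}, which gives $\sum_{b_k(\tau)\in g_{\overline{\lambda}}(\tau)} p_{ij\rightarrow k}^\tau = p_{ij\rightarrow\overline{\lambda}}^\tau$. Thus if every dish declares a cost $c_k \le p_{ij\rightarrow k}^\tau$, the new total is at most $p_{ij\rightarrow\overline{\lambda}}^\tau$, so by the previous paragraph the group wins; conversely, a declaration whose total exceeds the payment pushes the score below $\mathcal{U}$ and the group loses. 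Combining the two directions yields the critical-value property.

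The main obstacle I anticipate is the exploration term $\sqrt{2n_{sum}/n_{\overline{\lambda}}}$ together with the bookkeeping of $n_{sum}$ versus $n'_{sum}$: I must argue that removing $g_{\overline{\lambda}}(\tau)$ in line~\ref{line:pay_begin} leaves both $\mathcal{U}$ and the winner's exploration factor $e$ fixed as functions of the declared costs, so that the threshold computation legitimately treats $\mathcal{U}$, $U$, and $e$ as constants while only $C$ varies. A secondary subtlety is that the per-dish condition $c_k \le p_{ij\rightarrow k}^\tau$ is sufficient but not necessary for the total bound $\sum_k c_k \le p_{ij\rightarrow\overline{\lambda}}^\tau$; I would therefore state the losing direction at the level of the group total and then use monotonicity to transfer it back to the individual declarations, which is the step demanding the most care.
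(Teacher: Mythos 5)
Your forward direction is correct and coincides with the paper's own argument: the proportional split turns the per-dish conditions $c_k \le p_{ij\rightarrow k}^\tau$ into the total bound $C \le p_{ij\rightarrow\overline{\lambda}}^\tau$, and the algebra $\frac{U+Ce}{\mathcal{U}} \ge C \Leftrightarrow \frac{U}{C}+e \ge \mathcal{U}$ recovers the selection bound. The gap is in your losing direction. You claim the payment $p = \frac{U+Ce}{\mathcal{U}}$ is \emph{exactly} the largest winning declared cost, so that any larger total cost loses. That is false in general: the true threshold for the score inequality $\frac{U}{C'}+e \ge \mathcal{U}$ is $C^* = \frac{U}{\mathcal{U}-e}$, and a short computation gives $(U+Ce)(\mathcal{U}-e) = U\mathcal{U} + e\bigl(C(\mathcal{U}-e)-U\bigr) \le U\mathcal{U}$ whenever the group wins, i.e.\ $p \le C^*$, with equality only when the winner's score exactly equals $\mathcal{U}$. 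Whenever the group wins \emph{strictly}, every declared cost in $(p, C^*]$ still wins. Concretely, take $U=1$, $e=0.5$, $\mathcal{U}=1.5$, $C=0.5$: the payment is $p = 1.25/1.5 \approx 0.83$, yet a declared total cost of $0.9 > p$ gives score $1/0.9 + 0.5 \approx 1.61 > \mathcal{U}$ and wins. Your own algebra in fact proves only that the score evaluated at cost $p$ clears $\mathcal{U}$ if and only if the original winning condition holds --- i.e.\ the payment ``still wins'' --- not that $p$ is the supremum of winning costs; the sentence ``every larger total cost drives the score below $\mathcal{U}$'' does not follow from it.

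The paper's proof avoids this because it reads Definition~2 self-referentially: the payments appearing in the condition $c_k \le p_{ij\rightarrow k}^\tau$ are recomputed from the declared costs themselves. Under that reading, ``declared total cost exceeds its own payment'' means $C' > \frac{U+C'e}{\mathcal{U}}$, which is equivalent to $C'(\mathcal{U}-e) > U$, i.e.\ to losing; both directions then follow from one equivalence. (In the example above, at declared cost $0.9$ the recomputed payment is $1.45/1.5 \approx 0.97 > 0.9$, so the self-referential condition correctly reports a win.) The structural point you missed is that the payment in line~\ref{line:pay_end} depends on the winner's own declared cost $C$, so it cannot serve as a bid-independent threshold; any proof that freezes it and treats it as one --- as yours does --- must fail on the losing direction. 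To repair your argument you would either have to adopt the paper's self-referential reading, or show the payment equals $\frac{U}{\mathcal{U}-e}$, which it does not.
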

\begin{proof}

Given $c_k \hspace{-0.5mm}\le\hspace{-0.5mm} p_{ij\rightarrow k}^\tau, \forall b_k(\tau)\hspace{-0.5mm}\in\hspace{-0.5mm} g_\lambda(\tau)$, based on the definition of $p_{ij\rightarrow k}^\tau$, we have $\frac{c_k}{\sum_{b_k(\tau)\in g_\lambda(\tau)}c_k}p_{ij\rightarrow \lambda}^\tau \ge c_k$. Thus, we get $p_{ij\rightarrow \lambda}^\tau \hspace{-0.5mm}\ge\hspace{-0.5mm} \sum_{b_k(\tau)\in g_\lambda(\tau)}c_k$. Based on Algorithm \ref{algo:ucb1}, we know that $p_{ij\rightarrow \lambda}^\tau \hspace{-0.5mm}= \hspace{-0.5mm}\frac{U_{i,j\rightarrow \lambda}^\tau + \sum_{b_k(\tau)\in g_\lambda(\tau)} c_k \sqrt{\frac{2n_{sum}}{n_{\lambda}}}}{\mathcal{U}} \hspace{-0.5mm}\ge \hspace{-0.5mm}\sum_{b_k(\tau)\in g_\lambda(\tau)}c_k$. After doing some algebraic operations, we get $\frac{U_{i,j\rightarrow \lambda}^\tau}{\sum_{b_k(\tau)\in g_\lambda(\tau)}c_k} + \sqrt{\frac{2n_{sum}}{n_{\lambda}}} \ge \mathcal{U}$. Thus, we know that $g_\lambda(\tau)$ must win. Similarly, if the declared cost is greater than the payment, $g_\lambda(\tau)$ loses in the auction.
\end{proof}

\begin{theorem}
    The CSTP algorithm is truthful. 
    \label{thm2}
\end{theorem}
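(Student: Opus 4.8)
The plan is to prove truthfulness by invoking the standard characterization of dominant-strategy incentive-compatible mechanisms in single-parameter domains: a mechanism is truthful if and only if its allocation rule is monotone and each winner is paid its critical value. The only privately held, strategically reported quantity of a dish $v_k$ is its cost $c_k$ (the fields $\xi_k,\gamma_k,\Delta_k$ enter only as feasibility constraints), so the CSTP auction fits this template, with a dish's utility equal to $p_{ij\rightarrow k}^\tau - c_k$ when its group is selected and $0$ otherwise. Lemma~\ref{lemma-1} already supplies monotonicity and Lemma~\ref{lemma-2} already supplies the critical-value payment, so both hypotheses of the characterization are in hand; the task is to assemble them and discharge the per-dish incentive argument.

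First I would make the reduction explicit: fix an arbitrary dish $v_k$ together with all other dishes' declarations, and reason about $v_k$'s best response as a function of its reported cost $\hat{c}_k$. By Lemma~\ref{lemma-2} the per-dish threshold $p_{ij\rightarrow k}^\tau$ is a critical value, meaning $v_k$'s group is selected precisely when $\hat{c}_k$ stays at or below that threshold, and the payment collected upon winning equals the threshold. Second I would run the two-case exchange argument. If the true cost satisfies $c_k \le p_{ij\rightarrow k}^\tau$, truthful bidding wins with utility $p_{ij\rightarrow k}^\tau - c_k \ge 0$ (nonnegativity is exactly the individual-rationality property); any overbid that still clears the threshold leaves payment and utility unchanged, while an overbid crossing the threshold drops $v_k$ to utility $0$, which is no better. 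If instead $c_k > p_{ij\rightarrow k}^\tau$, truthful bidding loses with utility $0$, and the only way to win is to underbid below the threshold, forcing utility $p_{ij\rightarrow k}^\tau - c_k < 0$. In every case truthful reporting weakly dominates, which establishes the theorem.

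The main obstacle I anticipate is not the case analysis but justifying that the \emph{group-level} statements of Lemmas~\ref{lemma-1} and~\ref{lemma-2} genuinely certify incentive compatibility at the level of the \emph{individual} dish. The split in line~\ref{line:pay_alloc_end} is proportional to the declared cost, $p_{ij\rightarrow k}^\tau = \frac{c_k}{\sum_{b_k(\tau)\in g_{\overline{\lambda}}(\tau)} c_k}\, p_{ij\rightarrow \overline{\lambda}}^\tau$, so a dish's own report influences both whether its group wins (through $\sum c_k$ in the utility-to-cost ratio) and the fraction of the group payment it receives; this circularity means the ``threshold'' is only implicitly defined. I would resolve it using Lemma~\ref{lemma-2}, which shows the winning condition collapses to $\frac{U_{i,j\rightarrow \lambda}^\tau}{\sum_{b_k(\tau)\in g_\lambda(\tau)} c_k} + \sqrt{\frac{2n_{sum}}{n_\lambda}} \ge \mathcal{U}$, so that the largest $\hat{c}_k$ still consistent with this inequality is the genuine critical value. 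The delicate point to confirm is that the comparison value $\mathcal{U}$, computed from the groups remaining after $g_{\overline{\lambda}}(\tau)$ is removed, does not itself move with $v_k$'s report; only once that report-independence is verified does the threshold act as a fixed critical value from $v_k$'s perspective, after which the monotone-plus-critical-value characterization closes the proof.
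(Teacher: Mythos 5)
Your proposal follows exactly the same route as the paper: the paper's entire proof is a two-sentence appeal to Myerson's principle (a monotone allocation rule with critical-value payments is truthful), citing Lemma~\ref{lemma-1} for monotonicity and Lemma~\ref{lemma-2} for the critical value. Your explicit per-dish exchange argument and your flagged concern about lifting the group-level lemmas to individual-dish incentives (including whether $\mathcal{U}$ is independent of $v_k$'s own report when $v_k$ appears in other candidate groups) go beyond the paper, which silently treats the group-level statements as sufficient.
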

\begin{proof}
   According to the Myerson’s principle \cite{myerson1981optimal}, if an algorithm is monotone and its payment is the critical value, the algorithm is truthful. Thus, the CSTP algorithm is truthful based on Lemma \ref{lemma-1} and Lemma \ref{lemma-2}.
\end{proof}

\begin{theorem}
    The CSTP algorithm achieves individual rationality for each user dish.
    \label{thm3_0}
\end{theorem}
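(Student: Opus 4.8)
The plan is to reduce the claim to a single inequality on the \emph{total} payment of the winning group and then establish that inequality from the selection rule in Algorithm~\ref{algo:ucb1}. Recall that for the winner $g_{\overline{\lambda}}(\tau)$ the per-dish payment is set proportionally in line~\ref{line:pay_alloc_end} as $p_{ij\rightarrow k}^\tau = \frac{c_k}{\sum_{b_{k'}(\tau)\in g_{\overline{\lambda}}(\tau)} c_{k'}}\, p_{ij\rightarrow \overline{\lambda}}^\tau$. Writing $S = \sum_{b_k(\tau)\in g_{\overline{\lambda}}(\tau)} c_k$ and using that each declared cost is positive, dividing the target inequality $p_{ij\rightarrow k}^\tau \ge c_k$ by $c_k$ shows it holds for every dish simultaneously if and only if $p_{ij\rightarrow \overline{\lambda}}^\tau \ge S$. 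Hence it suffices to prove that the total payment is at least the total declared cost of the winning group.

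Next I would exploit the optimality of the winner. By line~\ref{line:co_end}, $g_{\overline{\lambda}}(\tau)$ maximizes $\frac{U_{i,j\rightarrow \lambda}^\tau}{\sum_{b_k(\tau)\in g_\lambda(\tau)} c_k} + \sqrt{\frac{2n_{sum}}{n_{\lambda}}}$ over all candidate groups with positive utility, so its value dominates that of every other group under this criterion. The quantity $\mathcal{U}$ in line~\ref{line:u_define} is the maximum of the same ratio plus $\sqrt{\frac{2n^{\prime}_{sum}}{n_{\lambda}}}$ taken over the groups remaining after the winner is removed. The key step is to compare these two: since removing $g_{\overline{\lambda}}(\tau)$ only shrinks the sum inside the logarithm, $n^{\prime}_{sum} \le n_{sum}$, so the exploration term entering $\mathcal{U}$ is no larger than the one used in the winner-selection rule. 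Therefore each term contributing to $\mathcal{U}$ is bounded above by the corresponding winner-selection value of that group, which is in turn bounded by the winner's value, giving $\mathcal{U} \le \frac{U_{i,j\rightarrow \overline{\lambda}}^\tau}{S} + \sqrt{\frac{2n_{sum}}{n_{\overline{\lambda}}}}$.

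Finally I would substitute this bound into the payment formula of line~\ref{line:pay_end}. Setting $\phi = \sqrt{\frac{2n_{sum}}{n_{\overline{\lambda}}}}$, the payment is $p_{ij\rightarrow \overline{\lambda}}^\tau = \frac{U_{i,j\rightarrow \overline{\lambda}}^\tau + S\phi}{\mathcal{U}}$, and the bound $\mathcal{U} \le \frac{U_{i,j\rightarrow \overline{\lambda}}^\tau + S\phi}{S}$ immediately yields $p_{ij\rightarrow \overline{\lambda}}^\tau \ge S$. Combined with the proportional split this gives $p_{ij\rightarrow k}^\tau \ge c_k$ for every $b_k(\tau)\in g_{\overline{\lambda}}(\tau)$, which is exactly individual rationality; note that this total-payment inequality is essentially the one already derived within the proof of Lemma~\ref{lemma-2}. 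The main obstacle is the bookkeeping in the second paragraph: one must notice that the winner-selection criterion and the denominator $\mathcal{U}$ use different log-sums ($n_{sum}$ versus $n^{\prime}_{sum}$) and verify $n^{\prime}_{sum}\le n_{sum}$ so that the winner's optimality still upper-bounds $\mathcal{U}$. A secondary care point is ensuring $\mathcal{U}$ is well defined and positive (existence of a second positive-utility group) together with the cost positivity required by the proportional-split step.
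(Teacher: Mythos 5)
Your proof is correct and follows essentially the same route as the paper's: use the winner's optimality under the selection rule to bound $\mathcal{U}$ by the winner's score $\frac{U_{i,j\rightarrow \overline{\lambda}}^\tau}{\sum_{b_k(\tau)\in g_{\overline{\lambda}}(\tau)} c_k} + \sqrt{\frac{2n_{sum}}{n_{\overline{\lambda}}}}$, rearrange to conclude that the total payment $p_{ij\rightarrow \overline{\lambda}}^\tau$ is at least the total declared cost, and then pass to each dish via the proportional split. The one place you go beyond the paper is in justifying the comparison between the winner's score and $\mathcal{U}$ despite the differing log-sums (observing $n^{\prime}_{sum} \le n_{sum}$); the paper's proof asserts this inequality without comment, so your extra bookkeeping tightens, rather than diverges from, its argument.
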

\begin{proof}
Given a winner $g_{\lambda}(\tau)$, we have $\frac{U_{i,j\rightarrow \lambda}^\tau}  {\sum_{b_k(\tau)\in g_\lambda(\tau)} c_k} + \sqrt{\frac{2n_{sum}}{n_{\lambda}} }   \ge \mathcal{U}$, where $\mathcal{U}$ is defined in line \ref{line:u_define} in Algorithm \ref{algo:ucb1}. Thus, we have  $\frac{U_{i,j\rightarrow \lambda}^\tau+ \sum_{b_k(\tau)\in g_\lambda(\tau)} c_k \sqrt{\frac{2n_{sum}}{n_{\lambda}}}}{\mathcal{U}} \ge \sum_{b_k(\tau)\in g_\lambda(\tau)} c_k$, where the left-hand side is $p_{ij\rightarrow \lambda}^\tau$. Hence, $p_{ij\rightarrow k}^\tau \hspace{-0.5mm} = \hspace{-0.5mm} \frac{c_k}{\sum_{b_k(\tau)\in g_\lambda(\tau)}c_k}p_{ij\rightarrow \lambda}^\tau \hspace{-0.5mm}\ge \hspace{-0.5mm} \frac{c_k \times \sum_{b_k(\tau)\in g_\lambda(\tau)} c_k}{\sum_{b_k(\tau)\in g_\lambda(\tau)}c_k} \hspace{-0.5mm}=\hspace{-0.5mm} c_k$.
\end{proof}

\begin{theorem}
    The CGSC costs $O(|\mathbb{T}(\tau)|^{\max}(|B(\tau)|^{\max}+\mathcal{N}\mathcal{M}^2))$, where $|\mathbb{T}(\tau)|^{\max}=  \max_{v_i\in V_{s}}|\mathbb{T}_i(\tau)|, |B(\tau)|^{\max} = \max_{v_i\in V_{s}, t_{i,j}^{\tau} \in \mathbb{T}_i(\tau)}|B_{i,j}(\tau)|$. \label{thm:complex1}
\end{theorem}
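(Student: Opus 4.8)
The plan is to bound the running time contributed by each block of Algorithm~\ref{algo:cgsc} separately and then multiply by the number of tasks processed by the outer loop. First I would observe that the \texttt{for} loop opened at line~\ref{line:candi_bid_begin} executes $|\mathbb{T}_i(\tau)| \le |\mathbb{T}(\tau)|^{\max}$ times, so it suffices to show that the per-task body runs in $O(|B(\tau)|^{\max}+\mathcal{N}\mathcal{M}^2)$ time. Inside the body, the delay-feasibility pass of lines~\ref{line:init_group_set}--\ref{line:candi_bid_end} and the layer-1 initialization of lines~\ref{line:layer_1_candidate_group_sets_1}--\ref{line:layer_1_candidate_group_sets_2} each scan $B_{i,j}(\tau)$ once, costing $O(|B_{i,j}(\tau)|)\le O(|B(\tau)|^{\max})$.

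The crux of the argument is bounding the \texttt{while} loop of lines~\ref{line:start_layer_group_constuction}--\ref{line:end_layer_group_constuction}. The key structural fact is that line~\ref{line:merge_layer_groups} forms the layer-$n$ groups only from \emph{pairs} drawn from the top-$\mathcal{M}$ set $[\overline{\mathbb{G}}_{i,j}]^{n-1}_{\mathcal{M}}$, which has at most $\mathcal{M}$ elements; hence $|[\overline{\mathbb{G}}_{i,j}]^n| = O(\mathcal{M}^2)$ for every $n\ge 2$. This is precisely what prevents the exponential blow-up that a naive power-set construction would incur. Consequently, selecting the $\mathcal{M}$ smallest-cost groups in line~\ref{line:construct_M} from the $O(\mathcal{M}^2)$ candidates costs $O(\mathcal{M}^2)$ by linear-time selection (with group costs maintained incrementally during merging), and forming the $O(\mathcal{M}^2)$ pairwise unions in line~\ref{line:merge_layer_groups} also costs $O(\mathcal{M}^2)$, treating each union and aggregate update as constant work. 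Since the loop runs for $n=2,\dots,\mathcal{N}$, its total cost is $O(\mathcal{N}\mathcal{M}^2)$.

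Next I would bound the two filtering passes of lines~\ref{line:traffic_bandwidth_req_check_start}--\ref{line:traffic_bandwidth_req_check_end} and lines~\ref{line:budget_filter_begin}--\ref{line:budget_filter_end}. Both iterate over the accumulated set $\overline{\mathbb{G}}_{i,j}$, whose size is the sum of the layer sizes, namely $O(|B(\tau)|^{\max})$ from layer~1 plus $O(\mathcal{N}\mathcal{M}^2)$ from the remaining layers; each data-amount, bandwidth, and budget test is $O(1)$ given the incrementally stored aggregates, so these passes together cost $O(|B(\tau)|^{\max}+\mathcal{N}\mathcal{M}^2)$. Summing the per-task contributions yields $O(|B(\tau)|^{\max}+\mathcal{N}\mathcal{M}^2)$, and multiplying by the $O(|\mathbb{T}(\tau)|^{\max})$ outer iterations gives the claimed bound.

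I expect the main obstacle to be justifying the layer-size bound $|[\overline{\mathbb{G}}_{i,j}]^n|=O(\mathcal{M}^2)$: one must argue carefully that the top-$\mathcal{M}$ truncation in line~\ref{line:construct_M} is applied at \emph{every} layer, so that the combinatorial growth is capped at each step rather than compounding across layers, and that the cost and capacity aggregates of each merged group can be updated in constant time so that neither the top-$\mathcal{M}$ selection nor the feasibility tests hides an extra factor of the group size $\mathcal{N}$.
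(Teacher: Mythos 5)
Your proposal is correct and follows essentially the same route as the paper's own proof: a per-task decomposition into the $O(|B(\tau)|^{\max})$ bid scan and layer-1 initialization, an $O(\mathcal{M}^2)$ bound per layer of the \texttt{while} loop (hence $O(\mathcal{N}\mathcal{M}^2)$ total) because merges are restricted to pairs from the top-$\mathcal{M}$ set, and the two filtering passes absorbed into the same terms, all multiplied by $|\mathbb{T}(\tau)|^{\max}$. You are in fact slightly more careful than the paper, which counts each merge and feasibility test as unit work without stating the constant-time-aggregate assumption and omits the layer-1 groups from its filtering-pass count, but these refinements do not change the argument or the bound.
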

\begin{proof}
The initialization of $\overline{B}_{i,j}(\tau)$ in lines \ref{line:candi_bid_begin}-\ref{line:candi_bid_end} and the preparation of $[\overline{\mathbb{G}}_{i,j}]^1$ in line \ref{line:layer_1_candidate_group_sets_1} need $2|B_{i,j}(\tau)|$ iterations. The layer-n group construction needs $\mathcal{N}\times\frac{\mathcal{M}(\mathcal{M}-1)}{2}$ iterations in lines \ref{line:start_layer_group_constuction}-\ref{line:end_layer_group_constuction}. Finally, we need $2\times \mathcal{N}\times\frac{\mathcal{M}(\mathcal{M}-1)}{2}$ iterations to exclude the groups that do not meet constraints in terms of the traffic amount, bandwidth, and budget in lines \ref{line:traffic_bandwidth_req_check_start}-\ref{line:traffic_bandwidth_req_check_end} and \ref{line:budget_filter_begin}-\ref{line:budget_filter_end}, respectively. Based on the above analysis, the CGSC algorithm costs $O(|\mathbb{T}(\tau)|^{\max}(2|B_i(\tau)|+\mathcal{N}\times\frac{\mathcal{M}(\mathcal{M}-1)}{2}+2\times \mathcal{N}\times\frac{\mathcal{M}(\mathcal{M}-1)}{2}))=O(|\mathbb{T}(\tau)|^{\max}(|B(\tau)|^{\max}+\mathcal{N}\mathcal{M}^2))$. 
\end{proof}

\begin{theorem}
    The CSTP costs $O((|\mathbb{T}(\tau)|^{\max})|\mathbb{G}(\tau)|^{\max})$, where $|\mathbb{G}(\tau)|^{\max}=\max_{v_i\in V_{s}, t_{i,j}^\tau \in \mathbb{T}_i(\tau)}|\mathbb{G}_{i,j}(\tau)|$ and $|\mathbb{T}(\tau)|^{\max}=  \max_{v_i\in V_{s}}|\mathbb{T}_i(\tau)|$. \label{thm4} 
    \label{thm3_1}
\end{theorem}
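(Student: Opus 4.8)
The plan is to bound the two nested loops of the CSTP algorithm (Algorithm \ref{algo:ucb1}) separately and then combine them, following the same iteration-counting style used in the proof of Theorem \ref{thm:complex1}. First I would observe that the outer \textbf{for} loop over tasks executes exactly $|\mathbb{T}_i(\tau)|$ times, which is bounded by $|\mathbb{T}(\tau)|^{\max}$; this supplies the leading factor. The entire remaining task is to show that processing a single task $t_{i,j}^\tau$ costs $O(|\mathbb{G}_{i,j}(\tau)|) = O(|\mathbb{G}(\tau)|^{\max})$.

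For the per-task cost, the key structural fact is that each pass of the inner \textbf{while} loop removes exactly one group from $\mathbb{G}_{i,j}(\tau)$ via the step $\mathbb{G}_{i,j}(\tau)=\mathbb{G}_{i,j}(\tau)-\{g_{\overline{\lambda}}(\tau)\}$, so the loop can run at most $|\mathbb{G}_{i,j}(\tau)|$ times before the candidate set is exhausted. Within a pass, the dominant operations are the computation of $n_{sum}$, the $\arg\max$ that selects $g_{\overline{\lambda}}(\tau)$, the recomputation of $n'_{sum}$, and the $\max$ defining $\mathcal{U}$; each of these is a single linear scan of the current group set, whereas the payment formula and the proportional split in the final block touch only the dishes of the chosen group. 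I would charge all of this to one linear scan over $\mathbb{G}_{i,j}(\tau)$.

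The subtle point — and the step I expect to require the most care — is reconciling a \textbf{while} loop of up to $|\mathbb{G}_{i,j}(\tau)|$ passes, each performing an $O(|\mathbb{G}_{i,j}(\tau)|)$ scan, with the claimed \emph{linear} rather than quadratic dependence on the group-set size. To secure $O(|\mathbb{G}_{i,j}(\tau)|)$ per task I would argue that in the intended operation the loop terminates as soon as the budget test $p_{ij\rightarrow \overline{\lambda}}^\tau \le \beta^\tau$ succeeds, so the expensive scan over $\mathbb{G}_{i,j}(\tau)$ is performed only a constant number of times; equivalently, each group is examined and then discarded $O(1)$ times through the removal step, amortizing the inner work to $O(|\mathbb{G}_{i,j}(\tau)|)$ over the whole task. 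I would flag explicitly that the bound hinges on this affordable-group-found termination rather than on a fully adversarial sequence of failed budget checks, since the latter would push the per-task cost toward quadratic.

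Finally I would multiply the two factors in the additive accounting style of Theorem \ref{thm:complex1}: pairing the outer-loop count $|\mathbb{T}_i(\tau)|$ with the per-task cost $|\mathbb{G}_{i,j}(\tau)|$ and taking maxima over $v_i$ and $t_{i,j}^\tau$ yields $O(|\mathbb{T}(\tau)|^{\max}\,|\mathbb{G}(\tau)|^{\max})$, which is exactly the stated bound.
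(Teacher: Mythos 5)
Your proposal follows essentially the same route as the paper's proof: the outer task loop contributes the factor $|\mathbb{T}(\tau)|^{\max}$, and the per-task cost is charged to the linear scans of $\mathbb{G}_{i,j}(\tau)$ that produce the selected group and the payment benchmark in lines \ref{line:co_end} and \ref{line:u_define}, giving $O(|\mathbb{T}(\tau)|^{\max}|\mathbb{G}(\tau)|^{\max})$. The caveat you flag is real, and it is worth knowing that the paper's own proof makes the very assumption you name explicit: it counts only the two scans per task, i.e., it treats the \textbf{while} loop body as executing $O(1)$ times (the budget check succeeds at once, or the candidate set is exhausted after a bounded number of removals); under an adversarial sequence of failed budget checks, each pass removes one group but still scans the whole remaining set, so the per-task cost would degrade toward $O\bigl((|\mathbb{G}(\tau)|^{\max})^2\bigr)$, exactly as you predict. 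One correction to your side remark: the ``equivalently, amortizing'' argument does not hold --- the fact that each group is removed at most once does not amortize the work to linear, because the cost of a pass is proportional to the size of the \emph{remaining} set, not to the single group removed; summing over $k$ passes gives $\Theta(k\,|\mathbb{G}_{i,j}(\tau)|)$, which is linear only when $k=O(1)$. So the stated bound rests entirely on the constant-number-of-passes assumption, not on amortization, and your explicit flagging of that assumption is the honest version of what the paper leaves implicit.
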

\begin{proof}
For each task, we need to iterate the set $\mathbb{G}_{i,j}(\tau)$ to find the collaborator group. In each iteration, we need to iterate the set $\mathbb{G}_{i,j}(\tau)$ twice to get the candidate group and payment in lines \ref{line:co_end} and \ref{line:u_define}. Thus, the CSTP costs $O(|\mathbb{T}(\tau)|^{\max}(|\mathbb{G}(\tau)|^{\max}+|\mathbb{G}(\tau)|^{\max}))=O((|\mathbb{T}(\tau)|^{\max})|\mathbb{G}(\tau)|^{\max})$. 
\end{proof}

\begin{theorem}
    The regret of the CSTP, which is the difference between the utility given by the CSTP and optimal value, is $\max_{t_{i,j}^\tau \in \mathbb{T}_i(\tau), g_{\lambda}(\tau) \in \mathbb{G}_{i,j}(\tau)}{\sum_{b_k(\tau)\in g_{\lambda}(\tau)}c_k}\overline{R}$, where we have $\overline{R}=8|\mathbb{T}_i(\tau)|\sum_{\lambda:U_{\max}^{\tau}  > U_{i,j\rightarrow \lambda}^\tau }\frac{\ln n_{sum}}{\Delta_{\lambda}} + (1+\frac{\pi^2}{3})\sum_{g_{\lambda}(\tau)\in \mathbb{G}_{i,j}(\tau)}\Delta_{\lambda}$, and then we have $U_{\max}^{\tau}=\max_{t_{i,j}^\tau \in \mathbb{T}_i(\tau), g_{\lambda}(\tau) \in \mathbb{G}_{i,j}(\tau)}\frac{U_{i,j\rightarrow \lambda}^\tau}{\sum_{b_k(\tau)\in g_{\lambda}(\tau)}c_k}$ and $\Delta_{\lambda}=U_{\max}^{\tau}-U_{i,j\rightarrow \lambda}^\tau$.
\end{theorem}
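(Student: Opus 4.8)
The plan is to recognize the CSTP selection rule as an instance of the classic UCB1 algorithm (Auer, Cesa-Bianchi, Fischer) applied to a stochastic multi-armed bandit whose arms are the collaborator groups in $\mathbb{G}_{i,j}(\tau)$. In this framing, pulling an arm $g_\lambda(\tau)$ yields a stochastic reward equal to the realized utility-to-cost ratio $\frac{U_{i,j\rightarrow\lambda}^\tau}{\sum_{b_k(\tau)\in g_\lambda(\tau)}c_k}$, where the randomness comes from the dish-failure events captured by the discount factor $\prod(1-f_k^\tau)$ in Eq.~(\ref{eq:discounted_utility}). The selection step in line~\ref{line:co_end} is exactly the UCB1 index: the running empirical estimate of the ratio plus the confidence radius $\sqrt{2n_{sum}/n_\lambda}$, where by line~\ref{line:co_begin} we have $n_{sum}=\ln\sum_\lambda n_\lambda$, so the radius coincides with the canonical $\sqrt{2\ln(\text{total pulls})/n_\lambda}$.

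First I would normalize the rewards into $[0,1]$ (dividing through by $U_{\max}^\tau$, or bounding the attainable range of the ratio) so that the Chernoff--Hoeffding inequality applies with the standard constants. Then I would reproduce the core UCB1 lemma: for any suboptimal group with gap $\Delta_\lambda = U_{\max}^\tau - \frac{U_{i,j\rightarrow\lambda}^\tau}{\sum_{b_k(\tau)\in g_\lambda(\tau)}c_k}>0$, the expected number of times it is selected satisfies $\mathbb{E}[n_\lambda]\le \frac{8\ln n_{sum}}{\Delta_\lambda^2}+1+\frac{\pi^2}{3}$. The argument is the standard one: a suboptimal arm is chosen only if its empirical ratio is over-estimated, the optimal arm's is under-estimated, or it has been pulled fewer than $\lceil 8\ln n_{sum}/\Delta_\lambda^2\rceil$ times; the first two events have probabilities summing to at most $\sum_t 2t^{-4}\le \pi^2/3$ via Hoeffding, while the third contributes the $8\ln n_{sum}/\Delta_\lambda^2$ term.

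Next I would assemble the regret. Writing the cumulative regret in ratio space as $\sum_{\lambda:\Delta_\lambda>0}\Delta_\lambda\,\mathbb{E}[n_\lambda]$ and substituting the per-arm bound yields $8\sum_{\lambda}\frac{\ln n_{sum}}{\Delta_\lambda}+\left(1+\frac{\pi^2}{3}\right)\sum_\lambda\Delta_\lambda$, which is $\overline{R}$ up to the $|\mathbb{T}_i(\tau)|$ factor. That factor appears because the platform executes the selection loop once per task (line~\ref{line:for_task}), so the suboptimal-pull counts accumulate across the $|\mathbb{T}_i(\tau)|$ tasks that share the candidate group set. Finally, since the true utility obeys $U_{i,j\rightarrow\lambda}^\tau = \big(\text{ratio}\big)\cdot\sum_{b_k(\tau)\in g_\lambda(\tau)}c_k$, regret measured in utility equals the ratio regret weighted by the group cost; I bound this by pulling out the worst-case total cost $\max_{t_{i,j}^\tau,\,g_\lambda(\tau)}\sum_{b_k(\tau)\in g_\lambda(\tau)}c_k$, giving the stated bound $\big(\max\sum_{b_k(\tau)\in g_\lambda(\tau)}c_k\big)\overline{R}$.

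I expect the main obstacle to be the two places where this setting departs from textbook UCB1. First, the utility-to-cost ratios are not a priori confined to $[0,1]$, so the concentration step needs a clean boundedness or normalization argument that preserves the constant $8$ and the $\pi^2/3$ term; keeping these constants intact under rescaling is the delicate point. Second, the tasks are not independent bandit instances—they draw from the same candidate pool, share the common budget $\beta^\tau$, and a group is removed once selected (lines~\ref{line:pay_begin} and~\ref{line:pay_alloc_begin}--\ref{line:pay_alloc_end}). I would argue that the budget filtering and the removal step only shrink the set of available arms and therefore cannot increase regret relative to the unconstrained UCB1 bound, so the $|\mathbb{T}_i(\tau)|$-fold aggregation over tasks remains valid; making this domination argument rigorous is the crux of the proof.
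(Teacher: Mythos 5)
Your proposal is correct and takes essentially the same route as the paper: the paper likewise frames CSTP as an extension of classic UCB1 acting on utility-to-cost ratios, invokes the regret bound of Auer et al.\ (2002) to obtain $\overline{R}$ (citing it as a black box rather than re-deriving it), and then converts ratio regret into utility regret by factoring out the maximum total group cost $\max_{t_{i,j}^\tau,\, g_{\lambda}(\tau)}\sum_{b_k(\tau)\in g_{\lambda}(\tau)}c_k$. The only difference is one of depth, not direction: you plan to reproduce the Chernoff--Hoeffding suboptimal-pull-count lemma and to explicitly handle reward normalization, the shared candidate pool across tasks, and group removal under the budget, all of which the paper's proof leaves unaddressed.
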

\begin{proof}
The CSTP, which incorporates an exploration factor to provide the groups with fewer number of times being selected more chance to be selected, can be viewed as an extension of the classic Upper Confidence Bound algorithm. Based on \cite{auer2002finite}, we know that the regret is upper bounded by $\overline{R}=8|\mathbb{T}_i(\tau)|\sum_{\lambda:U_{\max}^{\tau}  > U_{i,j\rightarrow \lambda}^\tau }\frac{\ln n_{sum}}{\Delta_{\lambda}} + (1+\frac{\pi^2}{3})\sum_{g_{\lambda}(\tau)\in \mathbb{G}_{i,j}(\tau)}\Delta_{\lambda}$, where we have $U_{\max}^{\tau}=\max_{t_{i,j}^\tau \in \mathbb{T}_i(\tau), g_{\lambda}(\tau) \in \mathbb{G}_{i,j}(\tau)}\frac{U_{i,j\rightarrow \lambda}^\tau}{\sum_{b_k(\tau)\in g_{\lambda}(\tau)}c_k}$ and $\Delta_{\lambda}=U_{\max}^{\tau}-U_{i,j\rightarrow \lambda}^\tau$. Given that the CSTP considers the ratio between the total utility and the total cost of each collaborator group during auction, the regret of the CSTP is $\max_{t_{i,j}^\tau \in \mathbb{T}_i(\tau), g_{\lambda}(\tau) \in \mathbb{G}_{i,j}(\tau)}{\sum_{b_k(\tau)\in g_{\lambda}(\tau)}c_k}\overline{R}$.
\end{proof}
\section{Performance Evaluation}
\label{sec:eval}

\begin{figure*}[!htb]
  \centering
  \includegraphics[width=\textwidth]{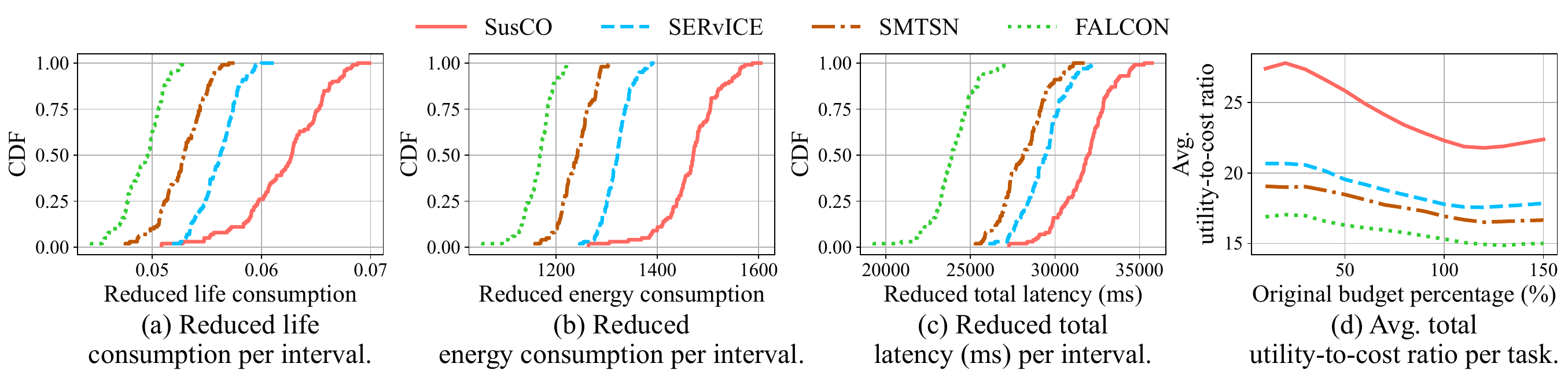}
  \caption{Performance metric comparison with other schemes.}
  \label{fig:cdf_all_metrics}
\end{figure*}

\subsection{Simulation Settings}

\textbf{LEO satellite network parameters.} We use the first-shell parameters of the largest real-world-based LEO satellite network, Starlink \cite{giuliari2021icarus}, to evaluate our proposed \ourFrameworkName{} framework. Furthermore, we test the performance and runtime of our proposed algorithms in additional real-world-based LEO satellite networks, Telesat \cite{DELPORTILLO2019123}, OneWeb \cite{DELPORTILLO2019123}, and Kuiper \cite{kuiper_fcc_report_2021}, to illustrate the robustness and scalability of our algorithms. The aggregated amount of data received by the source LEO satellite is set as $300$ Mb/s according to \cite{wang_enhancing_eo_2022}.

\textbf{Commercial dish parameters.}  The locations of the ground stations and 5G New Radio base stations are based on the locations of AWS Ground Stations \cite{aws_ground_station_locations_2024} and Internet user distribution used in \cite{yang_towards_2016}, respectively. The bandwidth of the ground stations and 5G New Radio base stations is set based on \cite{aws_ground_station_faqs_2024} and \cite{deng_how_much_to_pay_2020}, respectively.

\textbf{Other parameters.} The unit price $\alpha$ and $\beta$ for data amount and bandwidth reservation time are $0.09$ / GB and $0.17$ / second according to Amazon's EC2 pricing \cite{aws_ec2_pricing_on_demand_2024} and Microsoft Azure Orbital Ground Station pricing \cite{azure_orbital_pricing_2024}, respectively. The length of each interval is set as $60$ seconds \cite{chen_mobility_2021} and the simulations have $100$ intervals. We assume each LEO satellite has the same energy consumption rate $\epsilon_m$ and set it as $0.08$ Watt / Mbps according to \cite{yang_towards_2016}. The parameters for the battery life consumption are set based on \cite{yang_towards_2016}. The weights $\omega_1$, $\omega_2$, and $\omega_3$ are set as $0.3$, $0.4$, and $0.3$, respectively, where $\omega_2$ is higher since latency is a significant QoS. We set $\mathcal{N}$ as $2$ to reduce group size as only a limited number of GSLs can be established. We set $\mathcal{M}$ as $10$ to reduce the algorithm runtime by controlling the number of possible collaborator groups.

\textbf{Comparison settings.} We compare our \ourFrameworkName{} framework with three state-of-the-art schemes for LEO satellite network data offloading:

\begin{enumerate}
    \item  SERvICE scheme \cite{li_service_2018} calculates the normalized utility based on the estimated latency and bandwidth with the same weight for routing the data to the destination.

    \item  SMTSN scheme \cite{chou_sustainable_2022} calculates the normalized utility of sustainability and latency improvement with the same weight for each dish, using the extended service life and the estimated latency for routing the data to the destination, respectively.

    \item FALCON scheme \cite{lyu_falcon_2023} examines the offloading paths proposed by the dishes and selects the one with the lowest latency to minimize the total flow completion time.

\end{enumerate}

\subsection{Evaluation Results}

\begin{figure*}[!htb]
  \centering
  \includegraphics[width=\textwidth]{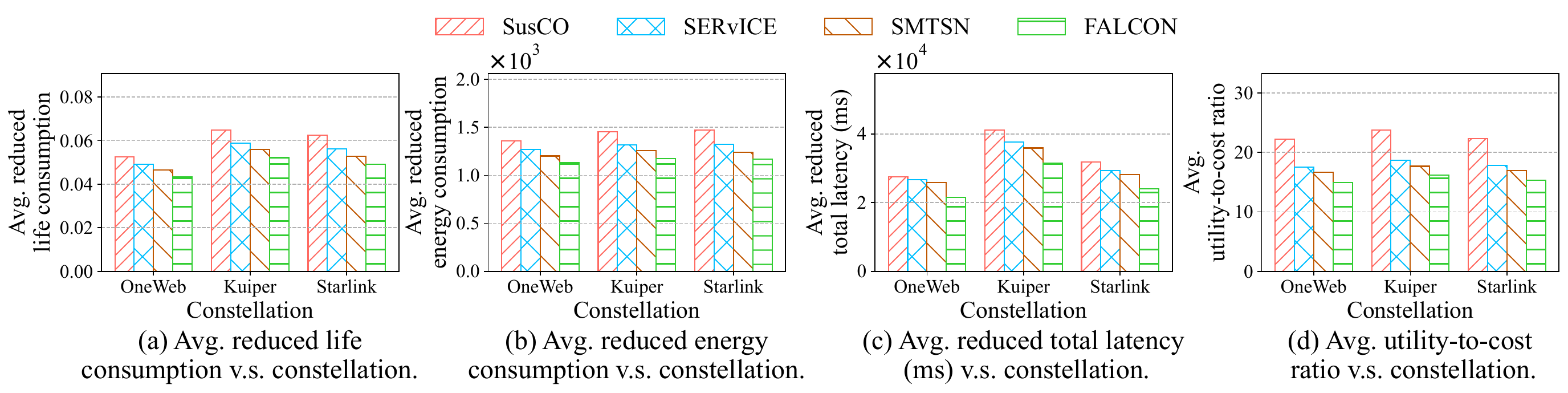}
  \caption{Performance metric comparison in different constellations.}
  \label{fig:barcharts_all_metrics}
\end{figure*}

\begin{figure}[t]
    \centering
    \begin{subfigure}[t]{0.49\linewidth}
        \centering
        \includegraphics[width=\linewidth]{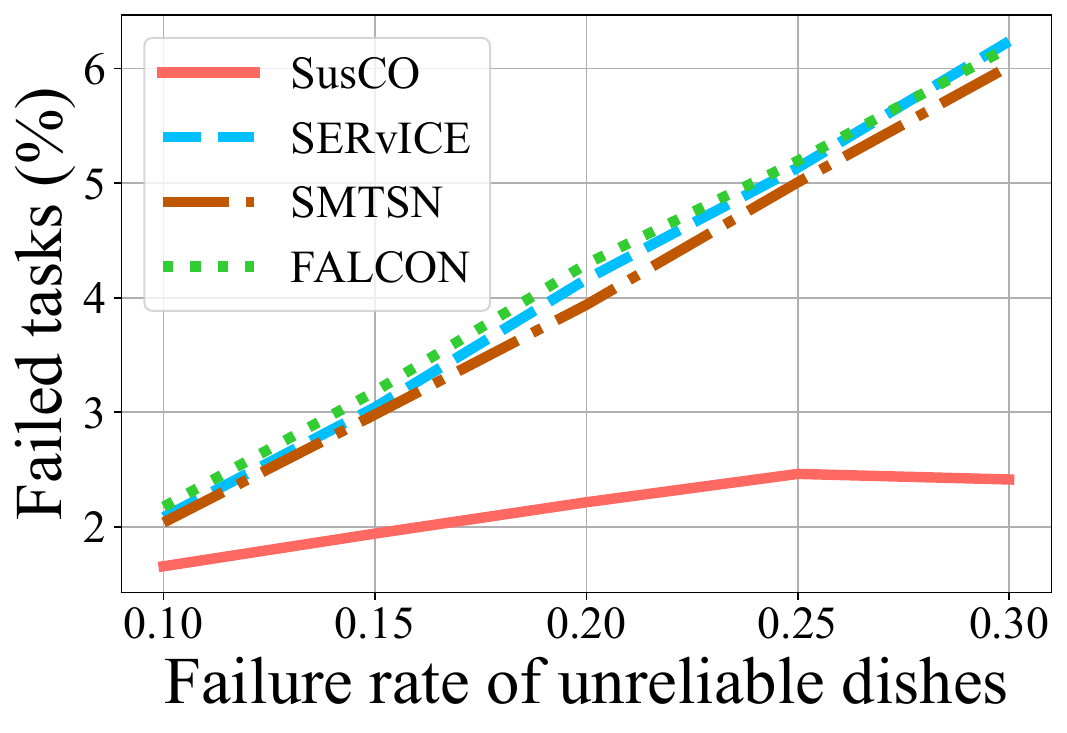} 
        \caption{Failure rate robustness.}
        \label{fig:failure_vs_failure_rate_of_unreliable_dishes}
    \end{subfigure}
    \hfill
    \centering
    \begin{subfigure}[t]{0.49\linewidth}
        \centering
        \includegraphics[width=\linewidth]{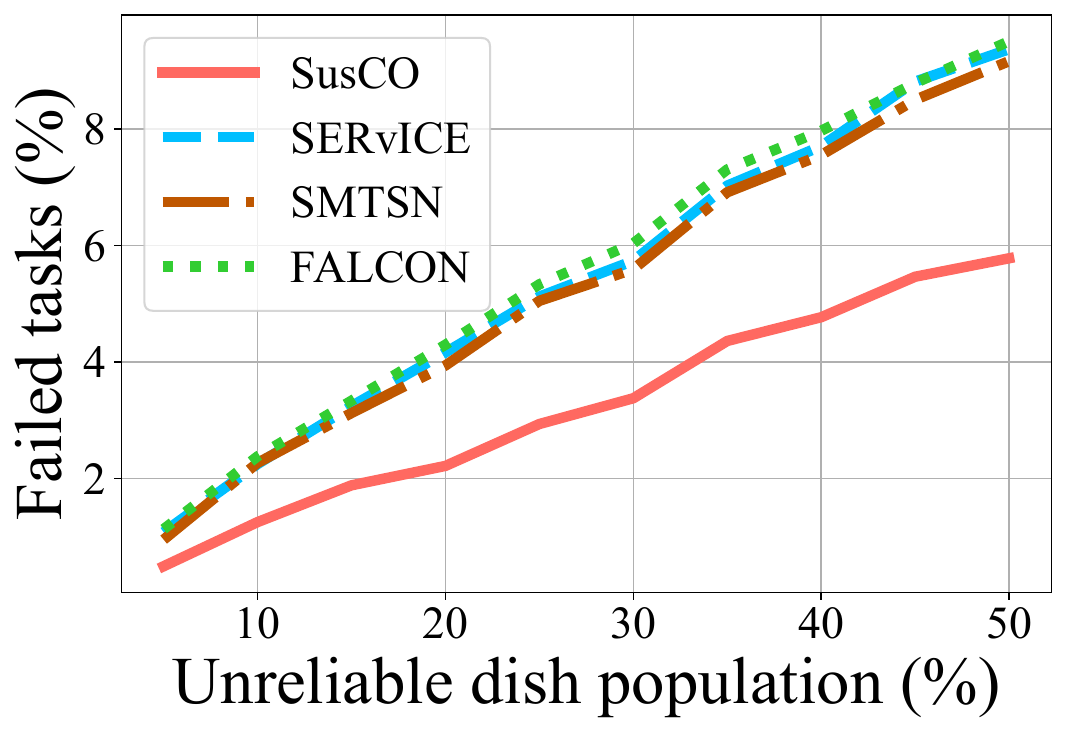} 
        \caption{Population robustness.}
        \label{fig:failure_vs_unreliable_dish_proportion}
    \end{subfigure}
    \caption{Performance comparison for offloading robustness.}
    \vspace{-3mm}
\end{figure}

\begin{figure}[t]
    \mbox{
        \begin{minipage}[t]{0.49\linewidth}
            \centering
            \includegraphics[width=\linewidth]{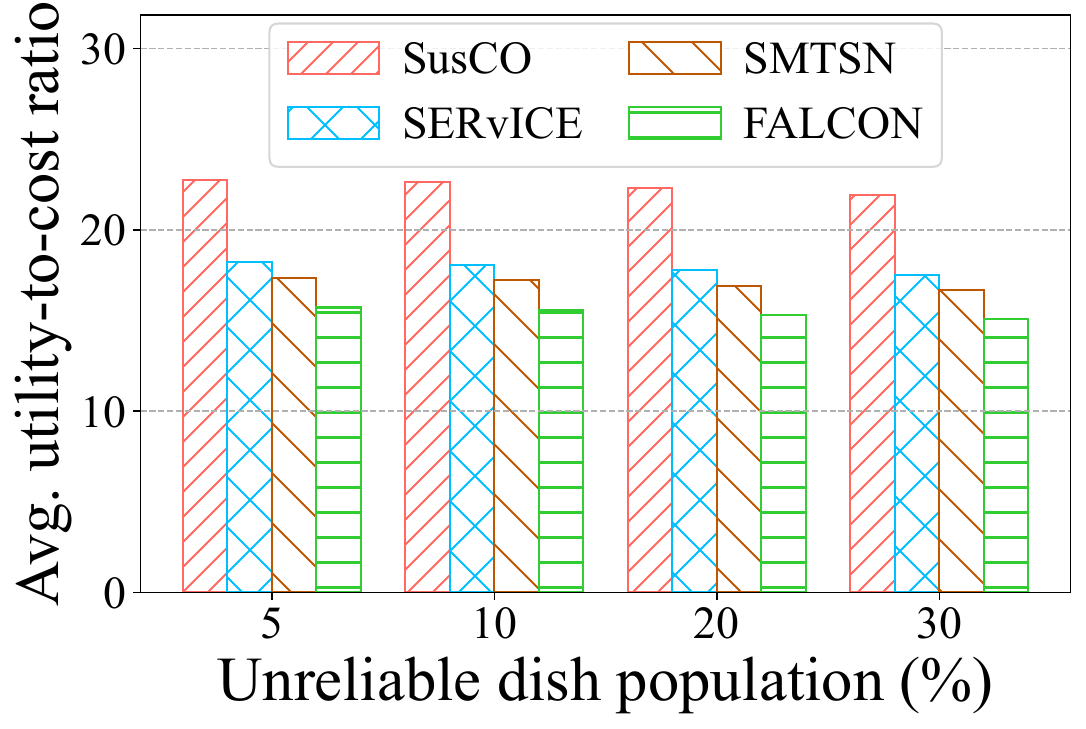} 
            \caption{Average utility-to-cost ratio v.s. unreliable dishes.} \label{fig:avg_uc_vs_unreliable_dish_population}
        \end{minipage}

        \begin{minipage}[t]{0.49\linewidth}
            \centering
            \includegraphics[width=\linewidth]{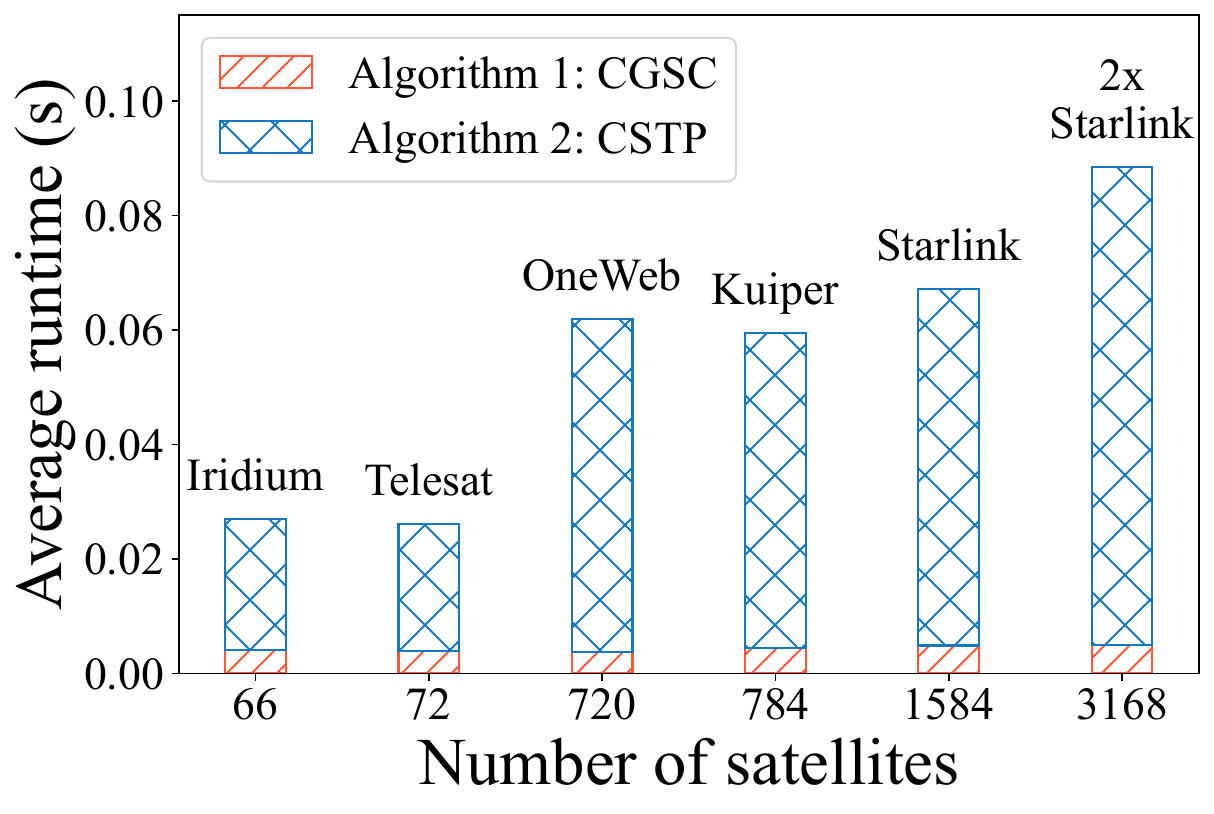} 
            \caption{Algorithm runtime.} \label{fig:algo_avg_runtime}
        \end{minipage}
    } 
\end{figure}

\subsubsection{Sustainable Extended Service Life}

Fig. \ref{fig:cdf_all_metrics}a shows the CDF of the reduced life consumption for the LEO satellite batteries per interval. Our \ourFrameworkName{} reduces $\lifeConsumptionReductionPercentageLow\%$ to $\lifeConsumptionReductionPercentageHigh\%$ more life consumption on average per interval compared to the other schemes since it can offload more tasks to the dishes in terrestrial networks with a limited budget.

By offloading tasks to terrestrial networks, the battery life consumption for the LEO satellite hops on the path after the offloading satellite can be reduced. This can significantly prolong the lifespan of those satellite batteries and service life.

\subsubsection{Efficient Energy Consumption}

As shown in Fig \ref{fig:cdf_all_metrics}b, our \ourFrameworkName{} reduces $\energyReductionPercentageLow\%$ to $\energyReductionPercentageHigh\%$ more energy consumption than the other schemes due to its ability to offload more tasks to terrestrial networks. The payment scheme of \ourFrameworkName{} effectively manages the limited budget to offload more tasks. 

After the tasks are offloaded, the energy consumption of the LEO satellite hops after the offloading satellite on the path is reduced. Furthermore, reducing energy consumption also reduces the negative impact on the battery lifespan for those satellites with low battery levels.

\subsubsection{Improved End-to-End Latency}

Fig. \ref{fig:cdf_all_metrics}c shows the CDF of the reduced total latency (ms) for all tasks per interval. Our \ourFrameworkName{} reduces $\delayReductionPercentageLow\%$ to $\delayReductionPercentageHigh\%$ more end-to-end latency per interval compared to the other schemes as \ourFrameworkName{} can effectively manage the limited budget and offload more tasks. 

The offloaded tasks can avoid the congested LEO satellite network and achieve better latency since \ourFrameworkName{} considers the estimated latency for routing data to the destination when calculating the utility. Furthermore, offloading tasks can also mitigate the congestion in the LEO satellite network.

\subsubsection{Efficient Use of Budget}

Fig. \ref{fig:cdf_all_metrics}d shows the average total utility-to-cost ratio per task in different amounts of budget. Our \ourFrameworkName{} has the highest average utility-to-cost ratio compared to the other schemes. A higher utility-to-cost ratio indicates more effective budget management for offloading. \ourFrameworkName{} uses the budget more effectively than the other schemes since it focuses on the utility-to-cost ratio, maximizing the total utility purchased by the unit budget.

While the budget is effectively managed for the LEO satellite network operator, the payments determined by \ourFrameworkName{} also provide dish providers with incentives to participate in auctions and share their resources.

\subsubsection{Robustness and Unreliable Dishes}
We evaluate the robustness of our \ourFrameworkName{} when there are unreliable dishes participating in auctions. The unreliable dishes are defined as those dishes that have a failure rate significantly larger than the industry average.

\textbf{Robust on high failure rates.} Fig. \ref{fig:failure_vs_failure_rate_of_unreliable_dishes} illustrates the percentages of failed offloading tasks with respect to the failure rate of the unreliable dishes. Our \ourFrameworkName{} framework maintains an almost log-scale and the lowest percentage of failed offloading tasks because it considers the failure rate in the discounted total utility as shown in Eq. (\ref{eq:discounted_utility}), where the dishes with high failure rates are less likely to be selected by our \ourFrameworkName.

\textbf{Robust on large unreliable dish population.} Fig. \ref{fig:failure_vs_unreliable_dish_proportion} also illustrates the percentage of failed offloading tasks but with respect to the population percentage of the unreliable dishes. Since any dish providers in terrestrial networks can bid, there may be a time when a significant proportion of dishes are unreliable. Our \ourFrameworkName{} demonstrates its ability to select reliable dishes even if the number of unreliable dishes increases, having the lowest percentage of failed offloading tasks compared to the other schemes.

\subsubsection{Robust Performance} 

We measure the robustness of the average performance in different constellations, as shown in Fig. \ref{fig:barcharts_all_metrics}. Our \ourFrameworkName{} is able to achieve the highest average performance in different constellations with different numbers of orbits, numbers of satellites per orbit, and inclination angles. 

Different constellations have unique network dynamics and ground coverage patterns. Our \ourFrameworkName{} can seamlessly adapt to the unique characteristics of each LEO satellite network, demonstrating its ability to be applied universally.

We measure the robustness of the average utility-to-cost ratios in different unreliable dish populations, as shown in Fig. \ref{fig:avg_uc_vs_unreliable_dish_population}. Our \ourFrameworkName{} is able to achieve the highest average utility-to-cost ratios even with larger unreliable dish populations.

\ourFrameworkName{} discounts the utility by considering the past failure history of the bidders, making the bidders with abnormal failure rates much less likely to win the task while also promoting the bidders with extraordinarily low failure rates.

\subsubsection{Algorithm Scalability}

\begin{table}[t]
\centering
\begin{tabular}{|c|c|c|c|c|}
\hline
\diagbox[width=3cm, height=1cm]{Runtime (ms)}{Size}{Constellation} & Telesat & Kuiper & Starlink & 2xStarlink \\
\hline
Number of satellites & 72 & 784 & 1584 & 2x1584 \\
\hline
CGSC & 6.31 & 5.07 & 5.50 & 5.32 \\
\hline
CSTP & 35.22 & 61.75 & 69.61 & 90.02 \\
\hline
Total & 41.53 & 66.82 & 75.11 & 95.34 \\
\hline
\end{tabular}
\caption{Algorithm runtime in different constellations.}
\label{tab:runtime_comparison}
\vspace{-6mm}
\end{table}

We record the runtime of our proposed algorithms on the LEO satellite networks in different constellations and other real-world-based parameters using an AMD EPYC 7313 3.0 GHz CPU as shown in Fig. \ref{fig:algo_avg_runtime} and TABLE \ref{tab:runtime_comparison}. 

\textbf{Constant runtime to construct candidate sets.} The CGSC algorithm takes a near-constant runtime on larger constellations to construct a candidate collaborator group set because we control the size of the candidate collaborator group set for each task, making it scalable on larger constellations. 

\textbf{Linear runtime to select winners.} The CSTP algorithm takes a linear runtime to select winners and determine payments for all tasks since it takes the collaborator group set from the CGSC algorithm to select the winners, where the size of the collaborator group set is controlled by the parameters.

\section{Related Work}
\label{sec:related_work}

\textbf{Bent-pipe routing.} The current largest LEO satellite network, Starlink, is mainly using the bent-pipe routing strategy as revealed by the measurement study conducted by Ma et al. \cite{ma_starlink_measure_2023}. As bent-pipe routing strategy still remains an important role in the industry, many early works \cite{hauri_internet_from_space_2020, Handley_ground_relays_2019, kassing_hypatia_2020, pan_relay_2023} studied the bent-pipe routing strategy. Furthermore, integrated space and terrestrial networks (ISTNs) can be considered as an extension of the bent-pipe routing strategy \cite{li_service_2018, chen_time_varying_2021, zhang_enabling_low_latency_2022, li_stable_hier_2024}. For example, Li et al. \cite{li_service_2018} proposed a QoS-based framework that assigns different weights to different QoS aspects in ISTNs.

Different from their works, we focus on a more practical scenario where dish providers in terrestrial networks require incentives, such as compensation for their costs, to cooperate with LEO satellite operators for data offloading.

\textbf{Inter-satellite link routing.} Inter-satellite links (ISLs) are considered a promising solution to realize low-latency routing in LEO satellite networks \cite{Handley_delay_not_option_2018}. Existing studies \cite{yang_towards_2016, chou_sustainable_2022, chou_fcsn_2024} focused on the in-space routing using ISLs to improve sustainability. For example, Yang et al. \cite{yang_towards_2016} proposed a routing algorithm based on satellite battery levels to extend satellite battery lifespan. Other existing work \cite{huang_Pheromone_2022} focused on utilizing ISLs to improve application QoS.

In addition to the routing using ISLs, we also focus on mitigating the sustainability issues using ground-satellite links (GSLs) to alleviate the negative impact of the traffic load on the battery lifespan.

\textbf{Ground station as a service routing.} As the market of Ground Station as a Service emerges, some existing studies \cite{singh_community_2021, lai_orbitcast_2021, lyu_falcon_2023} focused on routing data by utilizing commercial or shared ground stations. For example, Lai et al. \cite{lai_orbitcast_2021} exploited Ground Station as a Service to receive data from Earth observation satellites and Lyu et al. \cite{lyu_falcon_2023} focused on multi-path routing where Ground Station as a Service can be one of the paths. Singh et al. \cite{singh_community_2021} proposed a framework for satellite community users to share their ground station resources.

These studies complement our work. While they also focus on commercial or shared ground stations, we focus on building a framework where any ground station providers can participate and receive payment as compensation, while considering their service QoS.

\section{Future Work}
\label{sec:future_work}

Since all commercial dish providers can participate in our \ourFrameworkName{} framework, it is also possible that those unreliable providers with poor reputations for fulfilling the contract can bid for the offloading tasks. If one provider in the collaborator group fails to receive the offloading data, the offloading task will fail despite other reliable providers in the group successfully receiving the offloading data.

One intuitive way is to maintain a manual list of reliable and unreliable providers. However, this method is inefficient and may delay the onboarding of new reliable providers. New entrants may need to apply and wait for approval before they can participate in any offloading tasks. This can potentially slow down the market growth.

To ensure the market is efficient within our \ourFrameworkName{} framework, it is crucial to have an automated system for tracking provider performance. The objective of this system is to provide rewards for those reliable providers who fulfill the contract and penalize those unreliable providers who fail to do so. Our future work is to develop this automated system and integrate it with our \ourFrameworkName{} framework.

\section{Conclusion}
\label{sec:conclusion}

In this paper, we proposed a sustainable and cost-effective offloading framework named Sustainable Collaborative Offloading (\ourFrameworkName) for LEO satellite networks to offload data to commercial dishes. We formulated the Collaborative Dishes as Ground Stations problem and proposed the collaborator group set construction algorithm to construct candidate collaborator groups and the collaborator selection and total payment algorithm to select winners and determine payments. Finally, we ran extensive simulations and the results showed that our solution significantly improved energy consumption, satellite service life, and end-to-end latency.

\section*{Acknowledgements}
\label{acknowledgements}

This research was generously supported by an NSERC Discovery Grant.

\bibliographystyle{IEEEtran}
\bibliography{reference}

\end{document}